
\documentclass[10pt, a4paper]{article}
\usepackage{hyperref}
\usepackage{epstopdf}
\usepackage{stackrel}
\usepackage{graphicx}
\usepackage[utf8]{inputenc}
%% The amssymb package provides various useful mathematical symbols
\usepackage{amssymb}
%% The amsthm package provides extended theorem environments
\usepackage{amsthm}
\usepackage{here}
\usepackage{amsmath, calc}
\usepackage[authoryear]{natbib}

\usepackage{xcolor}
\usepackage{subcaption}
\usepackage[boxruled,linesnumbered]{algorithm2e}
\usepackage{lineno,hyperref}
\usepackage{amsmath,amsfonts,amssymb}
\usepackage{mathrsfs}
\usepackage{amsthm}

\usepackage{enumerate}
\usepackage{graphicx}
\usepackage{pstricks,graphicx}
\usepackage{pstricks-add}
\usepackage{pst-eucl}
\usepackage{xcolor}
\usepackage[utf8]{inputenc}
\usepackage[T1]{fontenc}
\usepackage{verbatim}
\usepackage{algorithmic}
\newtheorem{lemma}{Lemma}[section]

\newtheorem{Proposition}{Proposition}[section]

\usepackage{tikz}
\usepackage{hyperref}
\usepackage{bm}
\usepackage[toc,page]{appendix}
%\usepackage{graphicx}
%\modulolinenumbers[5]
\usepackage{subcaption}

\makeatletter
\newcommand*{\centerfloat}{%
  \parindent \z@
  \leftskip \z@ \@plus 1fil \@minus \textwidth
  \rightskip\leftskip
  \parfillskip \z@skip}
\makeatother

%\journal{Journal of }

%%%%%%%%%%%%%%%%%%%%%%%
%% Elsevier bibliography styles
%%%%%%%%%%%%%%%%%%%%%%%
%% To change the style, put a % in front of the second line of the current style and
%% remove the % from the second line of the style you would like to use.
%%%%%%%%%%%%%%%%%%%%%%%

%% Numbered
%\bibliographystyle{model1-num-names}

%% Numbered without titles
%\bibliographystyle{model1a-num-names}

%% Harvard
%\bibliographystyle{model2-names.bst}\biboptions{authoryear}

%% Vancouver numbered
%\usepackage{numcompress}\bibliographystyle{model3-num-names}

%% Vancouver name/year
%\usepackage{numcompress}\bibliographystyle{model4-names}\biboptions{authoryear}

%% APA style
%\bibliographystyle{model5-names}\biboptions{authoryear}

%% AMA style
%\usepackage{numcompress}\bibliographystyle{model6-num-names}

%% `Elsevier LaTeX' style
\bibliographystyle{elsarticle-num}
%%%%%%%%%%%%%%%%%%%%%%%

\begin{document}

%\begin{frontmatter}

\title{A random model for multidimensional fitting method}
%\tnotetext[mytitlenote]{Fully documented templates are available in the elsarticle package on \href{http://www.ctan.org/tex-archive/macros/latex/contrib/elsarticle}{CTAN}.}

\author{Hiba Alawieh\\ %\corref{mycorrespondingauthor}
Laboratoire Paul Painlev\'e, UFR de math\'ematiques, \\
Universit\'e Lille 1, $59650$ Villeneuve d'Ascq, France.\\
~\\
Fr\'ed\'eric Bertrand\\
IRMA, Universit\'e de  Strasbourg \& CNRS, \\
67084, Strasbourg, France.\\
~\\
Myriam Maumy-Bertrand\\
IRMA, Universit\'e de  Strasbourg \& CNRS, \\
67084, Strasbourg, France.\\
~\\
Nicolas Wicker\\
Laboratoire Paul Painlev\'e, UFR de math\'ematiques, \\
Universit\'e Lille 1, $59650$ Villeneuve d'Ascq, France.\\
~\\
Baydaa Al Ayoubi\\
Département de math\'ematiques appliqu\'ees,\\
Faculté des sciences, Université libanaise, \\
Beyrouth, Liban.
}

\maketitle

\begin{abstract}
Multidimensional fitting (MDF) method is a multivariate data analysis method recently developed and based on the fitting of distances. Two matrices are available: one contains the coordinates of the points and the second contains the distances between the same points. The idea of MDF is to modify the coordinates through modification vectors in order to fit the new distances calculated on the modified coordinates to the given distances. In the previous works, the modification vectors are taken as deterministic variables, so here we want to take into account the random effects that can be produce during the modification. An application in the sensometric domain is also given.

\end{abstract}

%\begin{keyword}
%Multidimensional data analysis \sep  Fitting distances \sep Random effect \sep Penalization \sep Optimization \sep Simulation.
%\end{keyword}

%\end{frontmatter}

%\linenumbers

\section{Introduction}

Multidimensional data matrices are encountered in many disciplines of science as in biological domain studying the gene expression data \cite{gene1,gene2},  the geographical domain analysing the spatial earthquake data \cite{geo}, in financial market data for portfolio construction and assessment \cite{fin} and many others domains. The complexity of  matrices can be observed for example when a variable is informative only for a subset of  data which   render the application of some statistical methods too hard. Therefore, multidimensional data analysis \cite{Mardia1979} refers to the process of summarizing data across multiple dimensions and presenting the results in a reduced dimension space. Several well-known methods exist which perform dimension reduction as principal component analysis, multiple correspondence analysis, multidimensional scaling, and others.  A new method of multidimensional data analysis called multidimensional fitting (MDF) was introduced and studied in \cite{MDF} and \cite{PMDF}. This method is a new method of fitting distances used in data analysis and requires two observed matrices, the target matrix and the reference matrix. The idea of MDF method is to modify the coordinates given by the target matrix in order to fit the new distances calculated on these modified coordinates to distances given by the reference matrix. In \cite{MDF} and \cite{PMDF}, the authors consider the displacement vectors as deterministic vectors and the random effect that can be produced during the modification and can affect the interpretation of the modification significance is not taken into account.

In this article we want to introduce the random effect in the model of MDF method and  to find then the real displacement vectors. Here, the minimization of the mean square error between the new distances and the reference distances performed in the deterministic model of MDF, to obtain the optimal values of displacement vectors, cannot be applied for the random model as the objective function here is a random variable. Therefore, we want to use different ways to find these vectors.

\noindent First of all, the random model of MDF is presented in Section $\ref{sec1}$, then in Sections $\ref{SecMin}$ and $\ref{secSim}$, two ways to obtain the optimal values of displacement vectors are illustrated. After that, an application in the sensometric domain has been presented in Section $\ref{secApp}$ in order to fit the sensory profiles of products to consumers preferences of these products. Finally, we conclude our work in Section $\ref{SecConc}$.

\section{The random model of multidimensional fitting} \label{sec1}
Let us consider a target matrix $X=(X_1|\dots|X_n)$ which contains the coordinates of $n$ points  in $\mathbb{R}^{^p}$ with $p>1$ in $\mathbb{R}$ and a reference matrix $D=\{d_{ij}\}$ which contains the distances between the same $n$ points.
We note $f$ the modification function such as: \[f(X_i)=X_i+L_i,\] where for $i=1,\dots,n$, the vectors $X_i$ and $ L_i$ in ${\mathbb{R}}^{^{p}} $  are, respectively, the coordinate and the displacement vectors of point $i$.

\noindent The problem of MDF is a mean square error minimization and the error, noted $\Delta$, defined by:
\[\Delta=\sum_{1\leq i<j\leq n} \left( d_{ij}-a \delta_{ij}\right)^2
\]
where $\delta_{ij}=\lVert f(X_i)-f(X_j) \rVert_{_2}=||X_{i}+L_{i}-X_{j}-L_{j}||_{_2}$ and $a$ a real scaling variable. We note: $e_{ij}=(d_{ij}-a \delta_{ij})^2\cdot$
    
\noindent Owing to the negligence of random effects that can occur during modification, the interpretation of the displacements can be erroneous. Thereby, to tackle this problem we introduce the random effects in the modification function. So, the new modification function is given by: \[f(X_i)=X_i+\theta_i+\varepsilon_i,\]
where  $\theta_i$ and $\varepsilon_i$ in $ {\mathbb{R}}^{^{p}}$ are, respectively, the fixed and random part of modification.

\noindent 	Contrary to what has been seen above,  $\delta_{ij}$ here is a random variable and not a deterministic value, for all $1 \leq i<j\leq n$, so the error $\Delta$ cannot be minimized directly.
Deterministic and stochastic optimization are presented to find the optimum value of vectors $(\theta_1,\dots,\theta_n)$:
\begin{itemize}
\item[1-] Deterministic optimization: by minimizing a function depending on vectors $\theta_1,\dots,\theta_n$ with consideration that the components of vectors $\varepsilon_i$, for all $i=1,\dots,n$, are independently and identically normally distributed.
\item[2-] Stochastic optimization: by simulating the error $\Delta$ where the components of vectors $\varepsilon_i$ for all $i=1,\dots,n$ are dependent and not normally distributed.
\end{itemize}

\section{Calculation of $(\theta^*_1,\dots,\theta^*_n)$ by minimization} \label{SecMin}

\noindent In this section, we suppose that the components of vector $\varepsilon_i$ denoted $\varepsilon_{ik}$, for all $i=1,\dots,n$ and $k=1,\dots,p$, are $p$-dimensional independent and identically distributed random variables where the vector $\varepsilon_i$ is multivariate normally distributed with mean $\mathbb{E}(\varepsilon_i)=\textbf{0}$ (the vector $\textbf{0}$ in $\mathbb{R}^{^{p}}$ is the null vector) and variance $\mathbb{V}ar(\varepsilon_i)=\sigma^2 I_p$ ($\sigma$ is a strictly positive value to be fixed and $I_p$ is the identity matrix).

\noindent We note $\Gamma$ a $n \times n$ matrix that contains the distances between the points after modification. The objective function of the minimization problem called $g({D}, \Gamma)$ can be expressed in different forms. We cite below some of them:
$$\begin{array}{cccc}
g: & \mathcal{M}_{n\times n}(\mathbb{R}) \times \mathcal{M}_{n\times n}(\mathbb{R}) & \longmapsto & \mathbb{R}\\
 & ({D}, \Gamma) & \longmapsto &  \begin{cases}
\begin{array}{c}
\mathbb{E}(\Vert {D}-a \Gamma \Vert_2^2) \\
\text{ }\text{med}(\Vert {D}- a \Gamma \Vert_2^2)\\
\text{ } \text{minmax}(\Vert {D}-a \Gamma \Vert_2^2)\\
\text{  }\Vert {D}-a \ \mathbb{E}(\Gamma) \Vert_2^2 \\
\text{  }\Vert {D}-a \ \text{med}(\Gamma) \Vert_2^2 \\
\text{  }\Vert \text{minmax}({D}-a  \Gamma) \Vert_2^2. \\
\end{array}\end{cases}
\end{array}$$

\noindent In our work, we are interested to take $g({D}, \Gamma)=\mathbb{E}(\Vert {D}-a\Gamma \Vert_{_2}^2)$. The expression of  $\Vert {D}-a\Gamma \Vert_{_2}^2$ noted $\Delta$ (as the mean square error cited above) can be rewritten as: \[\Delta=\displaystyle \sum_{1\leq i < j \leq n} (d_{ij}-a \Vert X_i+\theta_i+\varepsilon_i -X_j -\theta_j -\varepsilon_j \Vert_{_2})^2\cdot \]

\noindent The problem here is to find the vectors ($\theta^*_1, \dots, \theta^*_n$) such that the minimum of $\mathbb{E}(\Delta)$ under  ($\theta_1, \dots, \theta_n$) is reached. The initial optimization problem (P$_0$) is defined by:
\[ (\text{P}_0): \underset{\theta_1,\dots,\theta_n \in \mathbb{R}^{^p}}{\text{min }} \mathbb{E}(\Delta)\cdot \]

\noindent  The optimal solution obtained from (P$_0$) is a solution assigns the smallest value to $\mathbb{E}(\Delta)$  but moves too many points. So, it is a good solution from  minimization standpoint, but awkward from parsimony standpoint.

\noindent A new optimization problem is presented to find the optimal vectors $(\theta^*_1,\dots,\theta^*_n)$ by taking into account the minimization of the expectation of $\Delta$ and the parsimonious choice of displacements. A natural approach to obtain such sparsity solution is to use the number of non-zero displacements as a penalty. So, a penalty term can be defined, using $\ell_0$-norm,  as $\sum_{i=1}^n\rVert \theta_i \lVert_{_0}$  where \[\rVert . \lVert_{_0}=\# (i=1,\dots,n; k=1,\dots,p|\ \theta_{ik}=0)\] is the $\ell_0$ norm which measures the parsimony of the displacements of points. Thus, a new optimization problem called (P$_1$) is given by:
\[(\text{P}_1): \underset{\theta_1,\dots,\theta_n \in \mathbb{R}^{^p}}{\min} \hspace{2mm}\mathbb{E}(\Delta)+\eta \sum_{i=1}^n\lVert\theta_{i}\rVert_{_{0}},\\\\
\] 
with $\eta$ is a positive regularization parameter to be chosen. It controls the trade-off between the minimization of the expectation of the error and the use of a parsimonious number of displacements.

\subsection{Choice of  regularization parameter }
\noindent In different penalization problems as the penalized regression or penalized likelihood methods for high dimensional data analysis \cite{PL}, the choice of regularization parameter is always crucial to lead good results attached to the problem at hand. Different methods have been introduced to find the good value of this parameter (see \cite{pen1},\cite{pen2}). Some practical approaches consist in comparing  different models using a sequence of penalization parameter and then choose the best of them using some model selection criteria like Cross-validation (CV) (\cite{cv1}, \cite{cv2}), Akaike Information Criterion (AIC) \cite{aic} and Bayes Information Criterion (BIC) \cite{bic}. 

\noindent In our model, the choice of the value of $\eta$ is related to the number of displacements. With the same practical concept as the approaches presented in the literature, we want to  solve the optimization problem (P$_1$) by taking different values of $\eta$, and as our problem is related to the  number of displacements so  we choose a value of $\eta$ that takes into account the number of points that must be modified in our data to fit the references distances. This number of points can be computed from the data or fixed by the number of displacements that we want to perform. So, the chosen number of displacements can be taken by two ways:
\begin{itemize}
\item[1-] through the posed problem,
\item[2-]  using the data.
\end{itemize}   

\noindent Obviously, first way is trivial. Indeed, it is sufficient an user or a company choose a fixed number of displacements that wish perform  to find the desirable solution. Accordingly, fixing the number of displacements can be interesting to companies because in some cases a displacement can be difficult  and expensive therefore it is suitable for them to fix at the beginning  the number of displacements.  
For the second way, the number of displacements can be calculated using the data. Therefore, a criterion of points selection defined below is used to choose the number of displacements.

\subsubsection{Criterion for selection points}\label{Secrho}
The number of displacements is related to the number of points that are misplaced in their initial configuration and need movements  to fit their reference distances. For that, we have developed a simple criterion based on the error calculated on the initial data before data modification. This criterion for selection of the points is denoted $\rho_i$.

\noindent Indeed, for $i=1,\dots,n$ and $j=i+1,\dots, n$, we calculate the following difference:
 \[r_{ij}=(d_{ij}-a\lVert X_i-X_j\rVert_{_2})^2.\] 
\noindent Note that $r_{ij}$ is equivalent to $e_{ij}$ with $L_i=L_j=0$.  

\noindent Then, for each $i=1,\dots, n$, the criterion for selection points is defined as: \[\rho_{i}=\frac{\displaystyle{\sum_{m=1,m\neq i}^{n}r_{im}}}{\displaystyle{\sum_{1 \leq i<j\leq n}^{n} r_{ij}}}\cdot\]

\noindent The values of $\rho_i$ are between $0$ and $1$ so, for fixed value $\varrho \in [0,1]$  which is chosen through the value of $\rho_i$:
\begin{itemize}
\item if $\rho_i\leq \varrho$, then $i$ is considered as correctly placed point,
\item else,  $i$ is considered as misplaced point.
\end{itemize}
\vspace{5mm}

Now, in order to verify the interest of the modification of coordinates so as to approximate the distances, we want to perform a statistical test on the displacement vectors ($\theta^*_1, \dots,\theta^*_n$). 

\subsection{Statistical test for the displacement vectors $(\theta^*_1,\dots,\theta^*_n)$} \label{Sectest}
In this section, we want to present a statistical test for the displacement vectors  $\theta_i$ for all $i=1,\dots,n$. This test is based on the hypothesis of displacements significance. Recall the error:
\begin{equation}
\Delta=\displaystyle \sum_{1\leq i<j\leq n} \left(d_{ij}-a\lVert X_{i}+\theta_{i}+\varepsilon_{i}-X_{j}-\theta_{j}-\varepsilon_j \rVert_{_{2}} \right)^2. 
\end{equation}\label{eqE1} 

\noindent We note $\Delta_0$ the initial error given by: \[ \Delta_0=\sum_{1 \leq i <j \leq n} (d_{ij}-a \Vert X_i+\varepsilon_i-X_j-\varepsilon_j\Vert_{_2})^2.\]

\noindent The two hypothesis of the statistical test are:

\[\begin{cases}
\begin{array}{cc}
&(\mathcal{H}_{0}):\\
\\
&(\mathcal{H}_{1}):
\end{array} & \begin{array}{cc}
\text{For } (\theta^*_{1}, \dots, \theta^*_n) \text{ such that  }  d_{ij}=a\lVert X_i+\theta_i-X_j-\theta_j\rVert_{_2},\text{ for all }(i,j),\\
\text{ the initial error } \Delta_0 \text{ and the error }\Delta \text{ calculated from }\\
\text{the vectors  }(\theta^*_1,\dots,\theta^*_n) \text{ have the same distribution.}\\

\text{ The initial error }\Delta_0 \text{ and the error } \Delta \text{ calculated from the vectors }\\
(\theta^*_1,\dots,\theta^*_n) 
\text{have not the same distribution.}
\end{array}\end{cases}\] 

\noindent The error $\Delta$ is the test statistic and the decision rule is the following:
\[ \text{Rejection of } (\mathcal{H}_0) \Leftrightarrow \mathbb{P}_{\mathcal{H}_0}[\text{Reject }(\mathcal{H}_0)]\leq \alpha\Leftrightarrow \mathbb{P}_{\mathcal{H}_0}[\Delta\geq \Delta_c]\leq\alpha\cdot\]  

\noindent To perform this test, we use the Bienaymé-Tchebychev inequality:
\[\forall \gamma>0, \hspace{2mm }\mathbb{P}[|\Delta-\mathbb{E}(\Delta)| \geq \gamma]\leq \frac{\mathbb{V}ar(\Delta)}{\gamma^2}\cdot\] 

\noindent Moreover, we suppose that the random effect is injected in the observation so instead of observing $X_i$ we observe $X_i+\varepsilon_i$. Then, by choosing  $\gamma=|\Delta_0-\mathbb{E}(\Delta)|$, the ratio $\displaystyle \frac{\mathbb{V}ar(\Delta)}{\gamma^2}$ can be considered as $p$-value. So, if it is small than $\alpha$ then we reject the null hypothesis $(\mathcal{H}_0)$ with $\alpha$ is the error of type I. 

\noindent  computation of expectation and variance of error $\Delta$ are done in appendix. 
The results obtained in the  appendix are valid for any values of $\theta_1,\dots,\theta_n$. Therefore, under the hypothesis $(\mathcal{H}_0)$ it is sufficient to replace $\Vert X_i+\theta_i-X_j-\theta_j\Vert_{_2}$ by $d_{ij}$.

\subsection{The optimization problem}
Once hypothesis  $(\mathcal{H}_0)$ is rejected, the vectors ($\theta^*_1,\dots,\theta^*_n$) can be calculated by solving problem (P$_1$).

\noindent Using the results of appendix $\ref{App2}$, the expectation of the error $\Delta$ has been calculated from the expectation of the non-central chi-squared and  non-central chi distribution. 
\begin{Proposition}
The expectation of the error $\Delta$ is:
 \[ \displaystyle \mathbb{E}(\Delta)=\sum_{1\leq i<j\leq n} \left[ d_{ij}^2 +2a^2\sigma^2(p+\lambda^2_{ij})-2\sqrt{\pi}a \sigma d_{ij} \mathcal{L}_{\frac{1}{2}}^{\frac{p}{2}-1}\left(-\frac{\lambda_{ij}^2}{2}\right) \right], \label{eqEsp}\]
where $\displaystyle \lambda_{ij}=\frac{1}{\sqrt{2}\sigma}\lVert X_i+\theta_i-X_j-\theta_j\rVert_{_2}$ and $\mathcal{L}_{\nu}^{\gamma}(x)$ is the generalized Laguerre polynomial \cite{laguerre}.
\end{Proposition}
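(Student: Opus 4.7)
The plan is to expand $\mathbb{E}(\Delta) = \sum_{i<j}\mathbb{E}[(d_{ij}-a\delta_{ij})^2]$ termwise into
\[
\mathbb{E}[(d_{ij}-a\delta_{ij})^2] = d_{ij}^2 - 2ad_{ij}\mathbb{E}[\delta_{ij}] + a^2\mathbb{E}[\delta_{ij}^2],
\]
and then to identify the distributions of $\delta_{ij}$ and $\delta_{ij}^2$ so that the two expectations can be read off from classical formulas.

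First I would introduce, for each pair $i<j$, the auxiliary vector $Y_{ij} := (X_i+\theta_i-X_j-\theta_j) + (\varepsilon_i-\varepsilon_j)$, so that $\delta_{ij} = \lVert Y_{ij}\rVert_2$. Since $\varepsilon_i$ and $\varepsilon_j$ are independent $\mathcal{N}(\mathbf{0},\sigma^2 I_p)$, the vector $Y_{ij}$ is Gaussian in $\mathbb{R}^p$ with mean $\mu_{ij} = X_i+\theta_i-X_j-\theta_j$ and covariance $2\sigma^2 I_p$. Consequently $Y_{ij}/(\sqrt{2}\sigma)$ is $\mathcal{N}(\mu_{ij}/(\sqrt{2}\sigma), I_p)$, which shows that $\delta_{ij}/(\sqrt{2}\sigma)$ follows a non-central chi distribution with $p$ degrees of freedom and non-centrality parameter $\lambda_{ij}=\lVert\mu_{ij}\rVert_2/(\sqrt{2}\sigma)$, and hence $\delta_{ij}^2/(2\sigma^2)$ follows a non-central chi-squared distribution with the same parameters.

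Next I would apply the two standard moment formulas referenced in the appendix: the mean of the non-central chi-squared distribution with $p$ degrees of freedom and non-centrality $\lambda_{ij}^2$ is $p+\lambda_{ij}^2$, giving $\mathbb{E}[\delta_{ij}^2] = 2\sigma^2(p+\lambda_{ij}^2)$; and the mean of the non-central chi distribution is $\sqrt{\pi/2}\,\mathcal{L}_{1/2}^{p/2-1}(-\lambda_{ij}^2/2)$, giving
\[
\mathbb{E}[\delta_{ij}] = \sqrt{2}\,\sigma\cdot\sqrt{\pi/2}\,\mathcal{L}_{1/2}^{p/2-1}(-\lambda_{ij}^2/2) = \sqrt{\pi}\,\sigma\,\mathcal{L}_{1/2}^{p/2-1}(-\lambda_{ij}^2/2).
\]
Substituting these two quantities into the expansion above and summing over $1\le i<j\le n$ yields the announced formula.

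The main obstacle is the second step: the Laguerre-polynomial expression for the mean of a non-central chi variable is less standard than that of its square, so the proof really hinges on citing (or deriving in the appendix) the identity $\mathbb{E}[\chi_p(\lambda)] = \sqrt{\pi/2}\,\mathcal{L}_{1/2}^{p/2-1}(-\lambda^2/2)$, and on taking care of the scaling factor $\sqrt{2}\sigma$ when passing from the standardized variable $Y_{ij}/(\sqrt{2}\sigma)$ back to $\delta_{ij}$. The rest is purely algebraic bookkeeping, in particular making sure that the non-centrality parameter $\lambda_{ij}$ used in the Laguerre polynomial and the factor $2\sigma^2$ in front of $(p+\lambda_{ij}^2)$ match the scaling convention chosen for the non-central chi-squared law.
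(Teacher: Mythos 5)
Your proposal is correct and follows essentially the same route as the paper: the paper's Appendix B expands $e_{ij}=d_{ij}^2+a^2A_{ij}^2-2ad_{ij}A_{ij}$ and its Lemma A.2 identifies $A_{ij}/(\sqrt{2}\sigma)$ as a non-central chi variable with $p$ degrees of freedom and non-centrality $\lambda_{ij}$, reading off $\mathbb{E}(A_{ij})=\sqrt{2}\sigma\sqrt{\pi/2}\,\mathcal{L}_{1/2}^{p/2-1}(-\lambda_{ij}^2/2)$ and $\mathbb{E}(A_{ij}^2)=2\sigma^2(p+\lambda_{ij}^2)$ exactly as you do. The only cosmetic difference is that you use $\mu_{ij}$ for the mean vector whereas the paper reserves that symbol for the mean of the standardized chi variable.
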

 
\noindent The optimization problem (P$_1$) can rewritten as:
\[(\text{P}_1): \underset{\theta_1,\dots,\theta_n \in \mathbb{R}^{^p}}{\min} \hspace{2mm}a^2\lVert X_i+\theta_i-X_j-\theta_j\rVert^2_{_2}-2\sqrt{\pi}a \sigma d_{ij} \mathcal{L}_{\frac{1}{2}}^{\frac{p}{2}-1}\left(-\frac{\lVert X_i+\theta_i-X_j-\theta_j\rVert^2_{_2}}{4\sigma^2}\right) +\eta \sum_{i=1}^n\lVert\theta_{i}\rVert_{_{0}},\\\\
\]

\section{Calculation of $(\theta^*_1,\dots,\theta^*_n)$ by simulation} \label{secSim}
In this section, we suppose that the $p$ components of $\varepsilon_i$ are dependent or/and not necessarily normally distributed so the application of chi-squared and chi distributions becomes impossible. Therefore, we want to present an algorithm noted Algorithm $1$ which allows us to find the optimal vectors $\theta^*_1,\dots,\theta^*_n$ using Metropolis-Hastings \cite{MH1}.

\subsection{Simulation tools}\label{seKT}
Different tools used in algorithm $1$ and associated to the generation of vectors $\theta_1,\dots,\theta_n$ in order to minimize the error $\Delta$  are presented in the follow.
 
\subsubsection{Identification of misplaced and correctly placed sets}\label{SelP}

The set of points can be divided into two subsets:
\begin{itemize}
\item The first, noted $W$, having size equal to $n_W$. This subset contains the points that are correctly placed and should not be moved.
\item The second, noted $M$, having size equal to $n_M$. This subset contains the points that are misplaced and must be moved. 
\end{itemize}

\noindent The criterion for points selection  $\rho_i$ presented in Section $\ref{Secrho}$ is used to construct these two subsets.

\subsubsection{Movement of set $M$}
The subset $M$ contains the misplaced points that must be moved  in order to fit the reference distances. In this section, the work is concentrated to find movements for the subset $M$ approaching as possible as the distances calculated after movements to the reference distances. 
The movements that can be applied to $M$ are translation, scaling and rotation. The scaling movement is not interest in our study as the subsets $W$ and $M$ are in the same scale seen that are derived from the same set of points and the scaling variable $a$ presented in the optimization problem of MDF is kept. Moreover, we suppose that the points inside  $M$ are well concentrated so that the rotation movement can be neglected. That is why we are just interested on the translation movement. The translation of $M$ through a vector $B\in \mathbb{R}^{^{p}}$ can be shown as the translation of each points in $M$. So, the translation of a point $j\in M$ is given by: $Y_j+B$ where $Y_j \in \mathbb{R}^{^p}$ is the coordinate vector of point $j$.

\noindent The translation movement is performed in such a way to approach the distances calculated after translation to the distances given by the reference matrix. Thus, find the vector $B$ return to solve the following optimization problem:
\[(\mathcal{P}): \underset{B \in \mathbb{R}^{^{p}} }{\min} \sum_{i\in W}\sum_{j\in M}  \left(d_{ij}^{2}-a\parallel X_{i}-Y_{j}-B \parallel_{_2}^{2}\right)^{2}.\]

\noindent In order to simplify the problem $(\mathcal{P})$, we suppose that for all $i\in W$ and $j\in M$, $d_{ij}^{2}-a \parallel X_{i}-Y_{j}-b\parallel_{_2}^{2} \geq 0$ and the problem $(\mathcal{P})$ becomes:
\[(\mathcal{P}_{1})\begin{cases}
\begin{array}{c}
\\
\text{s.t}
\end{array} & \begin{array}{c}
\underset{B  \in \mathbb{R}^{^{p}}}{\min}\displaystyle{\sum_{i\in W}\sum_{j\in M}}\left(d_{ij}^{2}-a \parallel X_{i}-Y_{j}-B\parallel^{2}_{_2}\right)\\
\forall i\in W \text{ and }  j\in M, \ d_{ij}^{2}-a \parallel X_{i}-Y_{j}-B\parallel_{_2}^{2} \geq 0\cdot
\end{array}\end{cases}\]
\textbf{Relaxation of problem $(\mathcal{P}_{1})$:}
The following problem $(\mathcal{P}_{2})$ can easily solved and provide a starting point to resolve $(\mathcal{P}_{1})$. We have:

\begin{eqnarray}
(\mathcal{P}_{2}): & \displaystyle{ \sum_{i\in W}\sum_{j\in M}\left(d_{ij}^{2}-a \parallel X_{i}-Y_{j}-B\parallel_{_2}^{2}\right)}=0 \label{eqnul} \\
\Leftrightarrow & \displaystyle{\sum_{i\in W}\sum_{j\in M}}a \left(\parallel X_{i}-Y_{j}\parallel_{_2}^{2}+\parallel B\parallel_{_2}^{2}-2\langle X_{i}-Y_{j},B\rangle\right)-\displaystyle{\sum_{i\in W}\sum_{j\in M}}d_{ij}^{2}=0\nonumber\\
\Leftrightarrow & a\parallel B\parallel_{_2}^{2}-2a\frac{\displaystyle{\sum_{i\in W}\sum_{j\in M}}\langle X_{i}-Y_{j},B\rangle}{n_{W}n_{M}}=\frac{\displaystyle{\sum_{i\in W}\sum_{j\in M}}d_{ij}^{2}}{n_{W}n_{M}}-a \frac{ \displaystyle{\sum_{i\in W}\sum_{j\in M}} \Vert X_{i}-Y_{j}\Vert_{_2}^{2}}{n_{W}n_{M}} \cdot\nonumber
\end{eqnarray}
Hence,
\begin{eqnarray}
a\left\| B-\frac{\displaystyle{\sum_{i\in W}\sum_{j\in M}} (X_{i}-Y_{j})}{n_{W}n_{M}}\right\|_2^{2}=&\frac{\displaystyle{\sum_{i\in W}\sum_{j\in M}}d_{ij}^{2}}{n_{W}n_{M}}-a\frac{\displaystyle{\sum_{i\in W}\sum_{j\in M}}\parallel X_{i}-Y_{j}\parallel_{_2}^{2}}{n_{W}n_{M}}\nonumber\\
&+a\frac{\left( \displaystyle{ \parallel \sum_{i\in W}\sum_{j\in M}} (X_{i}-Y_{j})\parallel_{_2}\right) ^2} {n^2_{W}n^2_{M}}\nonumber\cdot
\end{eqnarray}

\noindent So,
\begin{equation}
\left\| B-\frac{\displaystyle{\sum_{i\in W}\sum_{j\in M}} (X_{i}-Y_{j})}{n_{W}n_{M}}\right\|_2^{2}\leq r^2, \label{eq2}
\end{equation}
\noindent with \[r^2=\left| \frac{\displaystyle{\sum_{i\in W}\sum_{j\in M}}d_{ij}^{2}}{a \ n_{W}n_{M}}-\frac{\displaystyle{\sum_{i\in W}\sum_{j\in M}}\parallel X_{i}-Y_{j} \parallel_2^{2}}{n_{W}n_{M}}+\frac{\left( \displaystyle{ \parallel \sum_{i\in W}\sum_{j\in M}} (X_{i}-Y_{j})\parallel_2\right) ^2} {n^2_{W}n^2_{M}} \right|\cdot\]

\noindent Using inequality (\ref{eq2}), we can conclude that the optimal solution of the vector $B$ belongs to an hypersphere $(S)$ centered in $ C=\displaystyle \frac{\sum_{i\in W} \sum_{j\in M} (X_i-Y_j)}{n_W n_M} $ with radius $r$. As Equation (\ref{eqnul}) is never equal to zero
during optimization, so we take it smaller than certain real value. 
Therefore, the optimal solution is guened to belong to an hypersphere 
 $(S^\xi)$ with same center $C$ as hypersphere $(S)$ but with radius equal to $r\pm\xi$ with $\xi$ a small value in $\mathbb{R}^{+}$. 

\noindent We suppose that vector $B$ is uniformly distributed between $0$ and $B_{max}$, so it is necessary to find the maximal value $B_{max}$. This value is geometrically determined using  Figure $\ref{figB}$.  Starting with a null value of vector $B$, $B$ moves uniformly on the line $(d)$ passing by $O$ (the point where the vector $B$ is null) and $C$  to reach its maximum at the point $A$, the far intersection between $(d)$ and hypersphere $(S^\xi)$, hence the uniqueness of $A$.

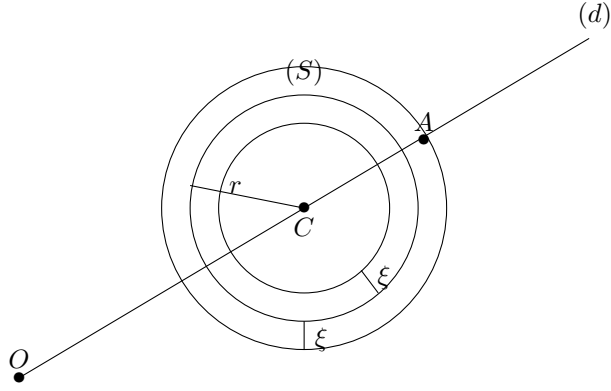
\begin{figure}[htp]
\begin{center}
\begin{tikzpicture}[scale=3/4]
\draw (0,0) circle (2.5cm);
\draw (0,0) circle (2cm);
\draw (0,0) circle (1.5cm);
\draw (0,0) node[below ]{$C$} ;
\draw (-5,-3) -- (5,3);
\draw (0,0) -- (-2,0.4);
\draw (1,-1.1) -- (1.3,-1.5);
\draw (0,-2) -- (0,-2.5);
\draw (0.3,-2.7) node [above] {$\xi$};
\draw (1.4,-1.6) node [above] {$\xi$};
\draw (-1.2,0.1) node [above] {$r$};
\draw (-5,-3) node [above] {$O$};
\draw (2.1,1.2) node [above] {$A$};
\draw (5.1,3) node [above] {$(d)$};
\draw (0,2) node [above] {$(S)$};
\foreach \Point in {(0,0), (2.1,1.2), (-5,-3)}{
    \node at \Point {\textbullet};
}

\end{tikzpicture}
\end{center}
\caption{ \textit{Illustration of the determination of vector $B$ in $\mathbb{R}^2$.  The maximal solution of $B$ is located at $A$. The values of vector $B$ moves uniformly on the segment $[OA]$.}} \label{figB}
\end{figure}

\noindent To calculate the maximal solution of vector $B$, it is needed to find the far intersection of the line $(d)$ with hypersphere $(S^{\xi})$. The line $(d)$ has as direction vector the vector $OC$. So, the parametric equation of $(d)$ is equal to:
\begin{equation}\label{eqSS}
B=(t+1)C\cdot
\end{equation}

Furthermore, we have:
\[\begin{cases}
\begin{array}{c}
B=(t+1)C\\
\parallel B-C\parallel_{_2}^{^2}=(r+\xi)^{2}.
\end{array}\end{cases}\]

\noindent The intersection between $(d)$ and $(S^{\xi})$ gives:
\begin{eqnarray}
\parallel(t+1)C-C\parallel_{_2}^{2}=(r+\xi)^{2}&\nonumber\\
\parallel tC \parallel_{_2}^{2}=(r+\xi)^{2}&\nonumber\\
t^2 \parallel C \parallel_{_2}^{2}=(r+\xi)^{2} &\nonumber\\
t=\pm \frac{r+\xi}{\parallel C \parallel_{_2}}\cdot& \nonumber
\end{eqnarray}

\noindent We are interested by the farthest intersection, thus we take $\displaystyle t= \frac{r+\xi}{\parallel C \parallel_{_2}}\cdot$ By replacing $t$ in Equation (\ref{eqSS}), we obtain:
\[B_{max}=\left( \frac{r+\xi}{\parallel C \parallel_{_2}}+1 \right) C.\]

\noindent The values of $B$ can be proposed uniformly on the segment $[OA]$, so \[B \leadsto \mathcal{U}\left(0,B_{max}\right).\]

\subsubsection{Movement vectors generation } \label{MovSec}
\noindent In practice, for all $k=1, \dots, n$, we suppose that  $M$ contains one point noted $l$. The choice of this point is made by a multinomial distribution $\mathcal{M}(1,\rho_1, \dots, \rho_n)$ where $\rho_k$ for $k=1,\dots,n$ is as defined in section $\ref{Secrho}$. 

\noindent At instant $t$, a point $l$ chosen as misplaced point must occur a movement through the uniform distribution such as $\mathcal{U}\left(0;(\frac{r_l+\xi}{\parallel C_l \parallel}+1)C_l\right)$ with $C_l$ and $r_l$ are, respectively, the center and the radius of hypersphere $(S_l^{\xi})$ obtained by taking $M=\{l\}$. Thus, the movement of the point $l$ is equal to the movement at instant $t-1$ plus the new movement obtained by uniform distribution. Hence, we can write:  \[\theta^*_l=\theta^{t-1}_l+B_l,\]
with $B_l$ is a generation value of the uniform distribution $\mathcal{U}\left(0,B^l_{max}\right)$. 
Noted that the equation of hypersphere given by (\ref{eq2}) in each instant depends of misplaced point $l$.  

\noindent We note
${\Theta}$ the sequence of $n$ generated vectors in ${\mathbb{R}}^{^{p}}$ defined by: 
${\Theta}=\left(
{\theta}_{1},\dots,{\theta}_{n}
\right)$. Therefore, the passage from $\theta_l^{t-1}$ to $\theta_l^{t}$ occurs in a way to move: \[\Theta^{t-1}=(\theta_1^{t-1}, \dots, \theta_l^{t-1}, \dots, \theta_n^{t-1})\]  to \[\Theta^t=(\theta_1^{t-1}, \dots, \theta_l^{t}, \dots, \theta_n^{t-1}).\]

\subsubsection{Proposal distribution}\label{seKTR}
A proposal distribution is needed in the Metropolis-Hastings algorithm defined below. This distribution is constructed by calculating the probability to pass from $\Theta^{t-1}$ to a new generate value of $\Theta$ denoted $\Theta^*$ and it is equal to the probability to choose a point $l$ multiplied by the probability of the movement of this point. 
So, the proposal distribution noted $q$ is given by:
\[q(\Theta^{t-1}\longrightarrow \Theta^{*})=\rho_{l} \times \frac{1}{\frac{r_{l}+\xi}{\parallel C_{l}\parallel}+1}\] with $l$ is the chosen point. 

\noindent We can easily see that this proposal distribution is a probability density function as $\sum_{i=1}^n \rho_i=1$.

\subsection{Calculation  of $(\theta_1,\dots,\theta_n)$ using Metropolis-Hastings algorithm}
\noindent We consider that the component of vector $ \varepsilon_i$ are dependent such that $ \varepsilon_i \rightsquigarrow \mathcal{N}_p(\textbf{0},\Sigma)$, with $\Sigma$ is the covariance matrix. 

\noindent The Metropolis-Hastings algorithm allows us to build a Markov chain with a desired stationary distribution \cite{MH1},\cite{MH2}. The proposal distribution here is related to the choice of vectors $\theta_i$ for $i=1,\dots,n$ and it is given in paragraph \ref{seKTR}. The target distribution is given by:
\[\pi(\Theta,\varepsilon)\propto{\exp\left(\frac{-E(\Theta)}{T}\right)\cdot h(\varepsilon)}\] where $E$ is an application  given by:
\[\begin{array}{cccc}
E: & \mathcal{M}_{n \times p} & \longmapsto & \mathbb{R}\\
 & \Theta=(\theta_{1},\ldots,\theta_{n}) & \longmapsto & \displaystyle E(\Theta)=\sum_{1 \leq i<j \leq n} \left(d_{ij}-a \Vert X_i+\theta_i-X_j -\theta_j\Vert_{_2} \right)^2,
\end{array}\]
\noindent and $h$ is the density function of the normal distribution $ \mathcal{N}_p(\textbf{0},\Sigma)$.  
The variable $T$ is the temperature parameter, to be fixed according to the value range of $E$.

\noindent The algorithm of Metropolis-Hastings is given as follows:\\
\begin{algorithm}[H] \label{algo2}
\caption{}
\begin{algorithmic}
\STATE{Initialization: $\Theta_0=(\theta_1|\dots|\theta_n)=(\textbf{0}| \dots| \textbf{0})$.}
\STATE{Calculate the ratios $\rho_1,\dots,\rho_n$. }
\FOR{$t=1$ to $N_1$} 
\STATE {Generate a point $l$ using multinomial distribution $\mathcal{M}(1,\rho_1,\dots,\rho_n)$.}
\STATE{Generate a vector $B_l$ using the uniform distribution $\mathcal{U}(0,B^l_{max})$ with $B^l_{max}=\left( \frac{r_l+\xi}{\parallel C_l \parallel_{_2}}+1 \right) C_l$}.
\STATE{Generate the vector $\theta_l^*=\theta_l^{t-1}+b_l$ and for all $i\in[1,\dots,n]-\{l\}$ take\\ the vectors $\theta_i^*$ equal to $\theta_i^{t-1}$. }
\STATE{Generate the vectors $\varepsilon^*_i$ using normal distribution $\mathcal{N}(\textbf{0},\Sigma)$ for $i=1,\dots,n$.}
\STATE {Calculate  $\displaystyle \alpha=\min\{1,\frac{\exp\left(\frac{-f(\Theta^*)}{T}\right) h(\varepsilon^*) q(\Theta^*\rightarrow \Theta^{t-1})}{\exp\left(\frac{-f(\Theta^{t-1})}{T}\right) h(\varepsilon^{t-1}) q(\Theta{^{t-1}}\rightarrow \Theta^*)}\}$}.
\STATE{Generate $u\thicksim \mathcal{U}(0,1)$}.
\IF {$u \leq \alpha$}
\STATE{ $\Theta^t=\Theta^*$}
\ELSE
\STATE{$\Theta^t=\Theta^{t-1}$}.
\ENDIF
\ENDFOR
\STATE{Choose $\Theta$ that gives the minimum value of error $\Delta$}
\end{algorithmic}
\end{algorithm}

\noindent \textbf{Remark:} The error $ \varepsilon_i$ can be distributed through any other distribution other than the Gaussian distribution, so it is sufficient to generate the vector $\varepsilon_i$ using this distribution instead the normal distribution in algorithm 1.

\section{Application}\label{secApp}

This random model of multidimensional fitting has been applied in the sensometrics domain. This relatively young domain concerns the analysis of data from sensory science in order to develop a product by linking sensory attributes to ingredients, benefits, values and emotional elements of the brand to design products that meet the sensory quality preferences of sensory-based consumer segments \cite{Tortila}.
This analysis of product characteristics among consumers gives an overview of the positive and negative aspects of products and aid the companies to better meet consumer tastes.
So, the problem here is to fit the consumers scores to the product configuration given by the experts in order to find the ideal sensory profile of a product. Thus, two matrices are at disposal, one contains the consumer scores of products and the second the sensory profile of products given by the experts.  

\noindent Several modelling techniques have been applied in sensory analysis domain like preference mapping which is the must popular of them. They can be divided into two methods: internal and external analysis \cite{Sen}. These methods have as objective to visually assess the relationship between the product space and patterns of preference \cite{Sen2}. In our application, we want to use the random model of multidimensional fitting to match as well as possible the sensory profile to the consumers preference of a product. White corn tortilla chips and muscadine grape juice data sets are used in our application. 

\subsection{Data description} 
White corn tortilla chips data set has been studied in \cite{Tortila} where $80$ consumers rated $11$ commercially available toasted white corn tortilla chip products for overall liking, appearance liking, and flavor liking.  The names of these $11$ tortilla chip products and their labels are given in the Table $\ref{tabname1}$.
Moreover, a group of $9$ Spectrum trained panelists evaluated apperance, flavor and texture attributes of tortilla chips using the Spectrum Method \cite{Sen} (Sensory Spectrum Inc., Chantham, NJ, U.S.A.). This data set is available at "\url{http://www.sensometric.org/datasets}" and it is composed from consumers notes table and panelists notes table. The first table is constructed after asked each consumer to evaluate liking, appearance, flavor and texture of each  tortilla chips sample on a 9-point dedonic scale and the saltiness on 5-point 'JustAboutRight' (JAR) (for more information about the scale, visit "\url{ http://www.sensorysociety.org/knowledge/sspwiki/Pages/The209-point20Hedonic20Scale.aspx}"). The second table is obtained after the evaluation of the $9$ panelists for flavor, texture and appearance attributes of all the chips and after that the calculation of the average score for each attribute. The total number of attributes studied in panelists notes table is $37$, we note some of them: sweet, salt, sour, lime, astringent, grain complex, toasted corn, raw corn,	masa, toasted grain\dots

\noindent The application of our method requires a target and reference matrices. The target matrix is given by the panelists notes table, so the dimension of this matrix is $11 \times 37$ and the reference matrix is a matrix of dimension $11 \times 11$ and contains the Euclidean distances between the different tortilla chip samples calculated using the consumers notes table.

\noindent Muscadine grape juice data set is well studied in \cite{Juice}, and it is composed from the scores of $61$ consumers and the average score for $15$ attributes given by $9$ panelists. This data is available at "\url{http://www.sensometric.org/datasets}".
Consumers evaluated $10$ muscadine grape juices for overall impression, appearance, aroma, color, and flavor. The name of the $10$ studied muscadine grape cultivars are given in the Table $\ref{tabname2}$. These $10$  juices are examined for aroma, basic tastes, aromatics, feeling factors by the group of Sensory Spectrum trained panelists. Likewise to white corn tortilla chips data set, this data set is composed from two tables: consumers notes table and panelists notes table. The first table contains the consumers evaluation of overall impression, appearance, aroma, color and flavor on the 9-point hedonic  scale and the second one contains the average score for each attribute after evaluation of the $9$ panelists for the basic tastes, aromatics and feeling factors attributes  for all muscadine juices.

\noindent The target matrix here is a matrix of dimension $10 \times 15$ constructed by the average score given by the panelists and the reference matrix is a matrix of dimension $10 \times 10$ constructed by the Euclidean distances between the consumers scores for the different cultivars of muscadine grape juices. To quote some of the studied attributes: sweet, sour,	cooked muscadine, cooked grape,	musty,	green unripe, floral	apple/pear,	fermented \dots

 \begin{table}[h]
\centering
\begin{minipage}[t]{.48\linewidth}
\begin{tabular}{|l|c|} 
\hline  
\textbf{Tortilla Chip names} & abb.\\
\hline
Best Yet White Corn & BYW \\
Green Mountain Gringo& GMG \\
Guy's Restaurant Rounds& GUY \\
Medallion White Corn& MED \\
Mission Strips & MIS \\
Mission Triangle & MIT \\
Oak Creek Farms-White Corn & OAK \\
Santita's & SAN \\
Tostito's Bite Size & TOB\\
Tom's White Corn & TOM \\
Tostito's Restaurant Style & TOR  \\
\hline \end{tabular}
\caption{White corn tortilla chip product names and labels}\label{tabname1}
\end{minipage}
\hfill 
\begin{minipage}[t]{.48\linewidth}
\begin{tabular}{|l|c|cc|} 
\hline
\textbf{Muscadine juice names} & abb.\\
\hline
Black Beauty & BB\\
Carlos& CA\\
Granny Val& GV\\
Ison&  IS\\
Nestitt& ME\\
Commercial Red& CR\\
Commercial White& CW\\
Southern Home& SH\\
Summit& SUM\\
Supreme& SUP\\
&\\
\hline
\end{tabular}
\caption{Muscadine grape juice names and labels}\label{tabname2}
\end{minipage}
\end{table}

\subsection{Experimental setup}
Random model of multidimensional fitting method is applied in the  independent and dependent cases of the components of vectors $\varepsilon_i$ for $i=1,\dots,n$. The presence of  Laguerre polynomial $\mathcal{L}_{\frac{1}{2}}^{\frac{p}{2}-1}\left(-\frac{\lambda_{ij}^2}{2}\right)$ in the objective function of problem (P$_1$) complicates the optimization and  makes the computation time too long.  A way to simplify the optimization resolution  is to calculate before the optimization a large set of Laguerre polynomial values corresponding to a large possible values of $\lambda^2_{ij}$ as the Laguerre polynomial presented in the expectation of the error $\Delta$ is related to the value of $\lambda_{ij}^2$. So we define $Z$ a set in $\mathbb{R}$ which contains many possible values of $\lambda_{ij}^2$ that can be used during the optimization. For all $1 \leq i<j \leq n$, the value of $\lambda_{ij}^2$ is proportional to the distance $\Vert X_i+\theta_i-X_j-\theta_j\Vert_{_2}^2$ thus the set $Z$ is related to the data set by the value of $d^2_{ij}$ for all $i,j$ as the objective of our optimization problem is to approach $\Vert X_i+\theta_i-X_j-\theta_j\Vert_{_2}$ to $d_{ij}$. Therefore, we define  $Z$ as follows: $Z=[-\overline{d^2}/(4 \sigma^2)+\ell,0]$ where $\overline{d^2}$ is the mean of squared distances $d_{ij}$ and $\ell$ is a value which gives the length of $Z$ and related to the maximal value of $d^2_{ij}$ in order to cover the largest possible values of $\lambda^2_{ij}$. The increment between the elements of $Z$ is taken equal to $10^{-2}$. After that, during optimization,  each value of $\lambda_{ij}^2$  calculated with a particular $\theta_i$ and $\theta_j$ is replaced by the nearest value  in $Z$ and the Laguerre polynomial value associated to this value is injected directly in the objective function.  This simplification gives results close to the results obtained directly by optimizing (P$_1$) and reduce thousandth times the resolution time. 

\noindent Moreover, the choice of $\sigma$ and the scale parameter $a$ are crucial to obtain good results. Therefore, the value of $\sigma$ is taken  equal to the mean of the $p$ standard deviation calculated on the target matrix $X$, so we can write $\displaystyle \sigma=\frac{\sum_{k=1}^p \sigma_k}{p}$. Concerning the parameter $a$, we calculate it using the following algorithm:
\begin{algorithm}[H] 
\caption{}
\begin{algorithmic}\label{algoa}
\STATE{Initialization: $\Theta^0=(\theta^0_1|\dots|\theta^0_n)=(\textbf{0}| \dots| \textbf{0})$.}
\FOR{$t=1$ to $N_2$} 
\STATE {Solve problem $(P_a)$:\\ $\displaystyle \underset{a \in \mathbb{R}^+}{\min} \sum_{1\leq i<j\leq n}(d_{ij}-a\Vert X_i+\theta^{t-1}_i-X_j-\theta^{t-1}_j\Vert)^2  $}.
\STATE{Solve problem  $(P_\theta)$:\\ $\displaystyle \underset{(\theta_1,\dots,\theta_n) \in \mathbb{R}^p}{\min} \sum_{1\leq i<j\leq n}(d_{ij}-a^t\Vert X_i+\theta_i-X_j-\theta_j\Vert)^2  $}.
\STATE{$\Theta^t=(\theta^t_1,\dots,\theta^t_n)$ solution of problem $(P_\theta)$}.
\ENDFOR
\end{algorithmic}
\end{algorithm}

\noindent  NLopt library (Version 2.4.2) \cite{Nlopt} implanted in language $C$ a free and open-source library is used to solve problem (P${_1}$) as this problem is a non-linear and non-convex optimization problem. In this library, numerous algorithms exist to solve such non-linear optimization problems. In our application, we choose Sbplx algorithm which is based on Subplex method \cite{Subplex} that is a generalization of Nelder-mead simplex.

\noindent Concerning simulation algorithm,  the covariance matrix $\Sigma$  is given by the covariance of matrix $X$ multiply by a constant $c$. The parameter  $\xi$ presented in the proposal distribution is taken equal to $10^{-4}$.  Concerning the temperature parameter, we take it equal to $T=100$.
Moreover, during simulation, the number of iterations is taken equal to $N_1=300$.

\subsection{Results}
\subsubsection{Optimization results}
First, we want to define the different values of parameters $a$, $\sigma$ and  $\ell$ for the two data sets. Table $\ref{taba}$ gives these values:
\begin{table}[htpb]
\begin{center}
\begin{tabular}{|c|ccc|}
  \hline
 & $a$ & $\sigma$ & $\ell$\\
 \hline
 White corn tortilla chips & $ 26.37$ & $0.55$ & $1000 $\\
 Muscadine grape juices & $ 36.9$ & $0.64$ & $1000$  \\
  \hline
\end{tabular} 
\caption{The values of parameters $a$, $\sigma$ and  $\ell$ for the two data sets.}\label{taba}
\end{center}
\end{table} 

\noindent The values of $a$ for the two data sets are calculated using algorithm $\ref{algoa}$. Figure $\ref{figa}$ depicts the trace plots of the value of $a$ at each iteration for the two data sets. We show clearly that the value of $a$ converge to an optimal value.
Concerning the value of $\ell$, a choice of $1000$ for the two data set can be reasonable  as the maximum value of the squared distances in the two data sets is in the range of $1000$.
\begin{figure}[htpb] 
\centerfloat 
 \begin{minipage}[t]{.65\linewidth}
            \makebox[\textwidth]{  \includegraphics[width=8cm, height=5.33cm]{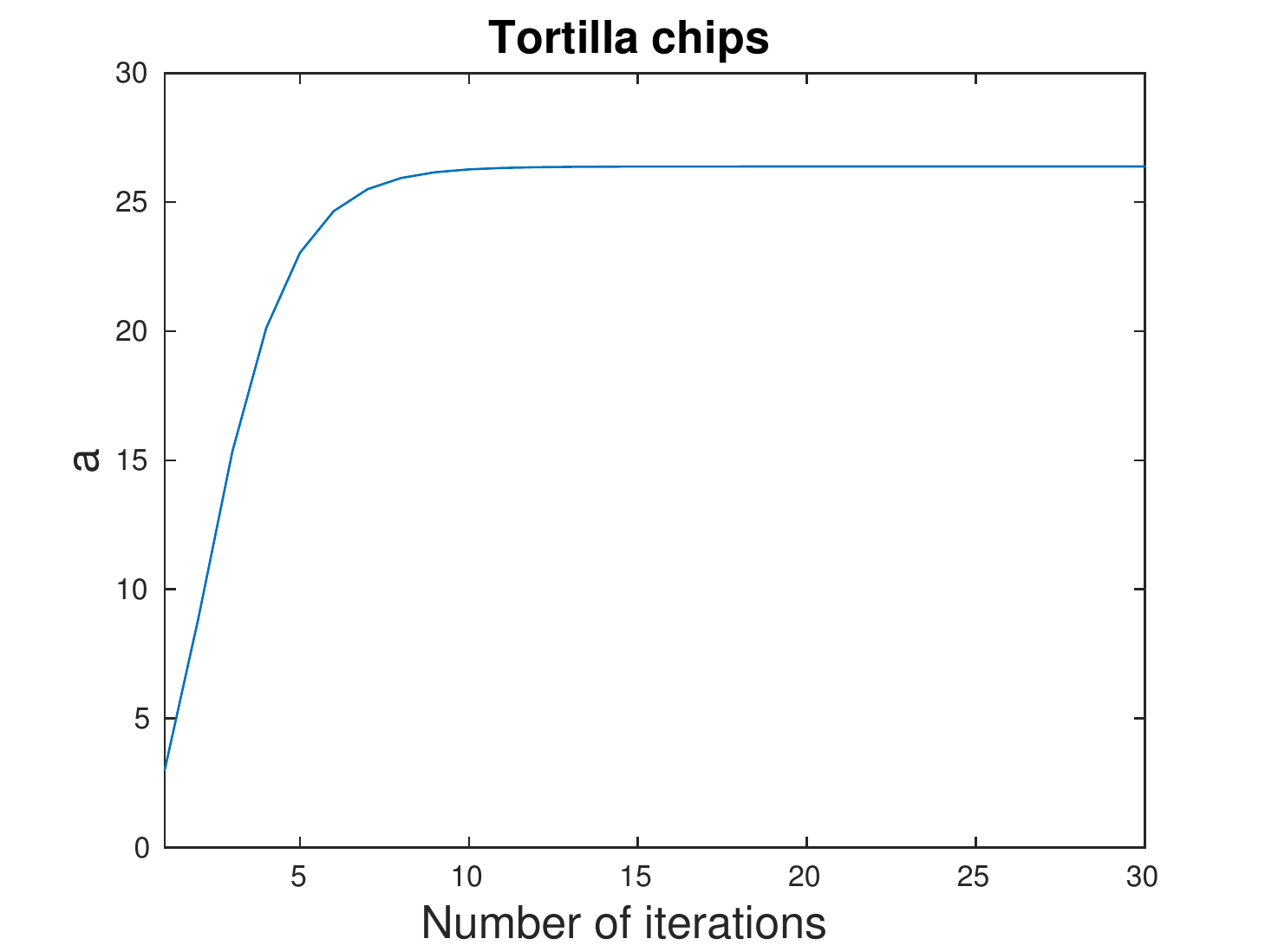}}
              
    \end{minipage}    
\hfill
\begin{minipage}[t]{.65\linewidth}
             \makebox[\textwidth]{ \includegraphics[width=8cm, height=5.33cm]{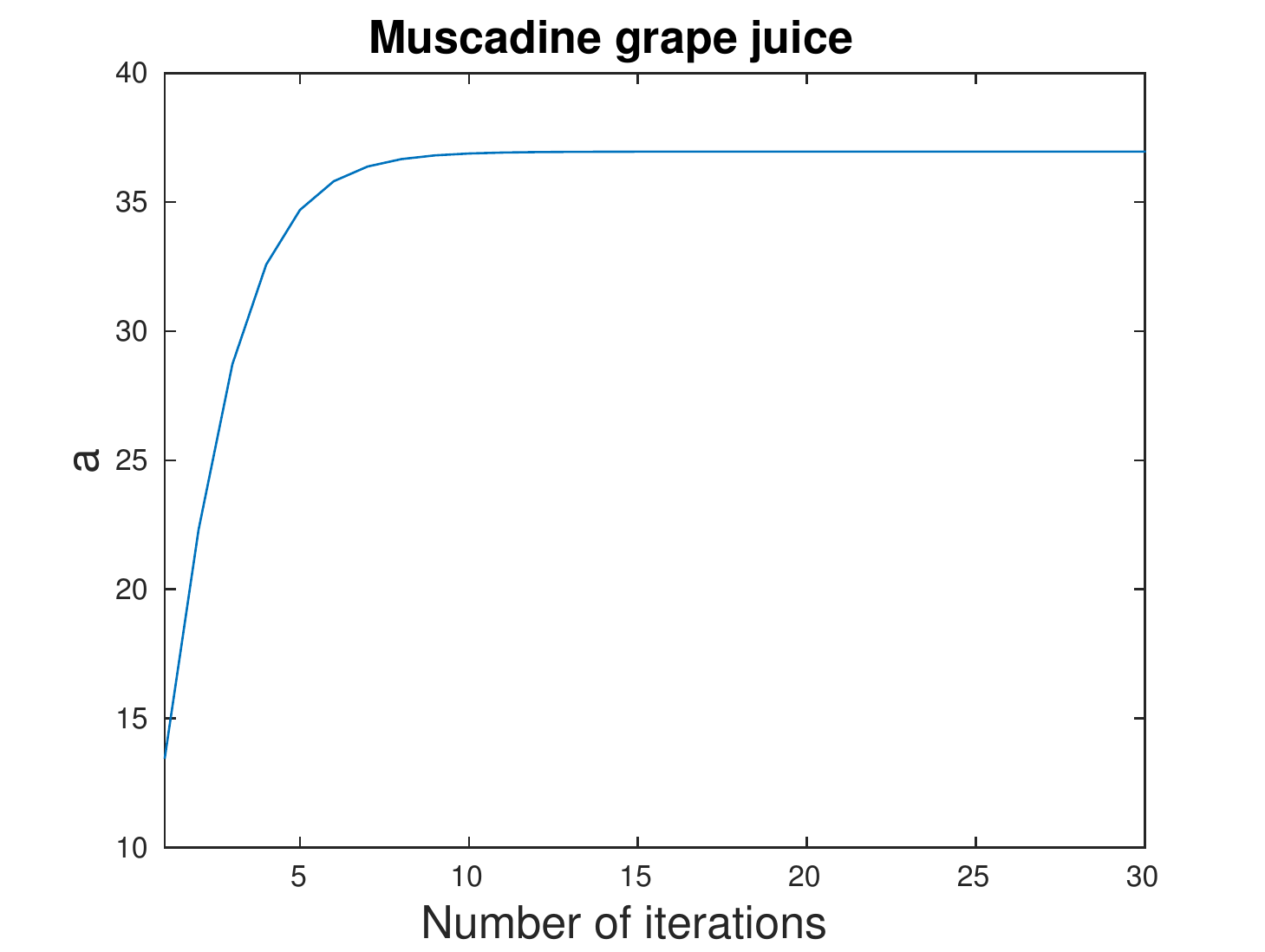}   }  
        \end{minipage}  
        \caption{The trace plots of the results of the algorithm $\ref{algoa}$ for tortilla chips and muscadine grape juice data sets.}\label{figa}
\end{figure} 

\noindent After parameters determination, the statistical test developed in Section \ref{Sectest}  has been applied to perform the interest of the displacements of the points. The values of the ratio $\mathcal{R}=\displaystyle \frac{\mathbb{V}ar(\Delta)}{(\Delta_0-\mathbb{E}(\Delta))^2}$  calculated for the two data sets are given in the Table $\ref{tabSig}$.
\begin{table}[htpb]
\begin{center}
\begin{tabular}{|c|c|}
  \hline
 & $\mathcal{R}$ \\
 \hline
 White corn tortilla chips & $0.0239$ \\
 Muscadine grape juices & $ 0.0153$\\
  \hline
\end{tabular} 
\caption{The values of $\mathcal{R}$ for white corn tortilla chips and muscadine grape juices data sets.}\label{tabSig}
\end{center}
\end{table} 
\noindent As the two values of $\mathcal{R}$ for the two data sets are smaller than $0.05$,  so the statistical test is significant for $\alpha=0.05$ therefore we reject the null hypothesis $(\mathcal{H}_0)$ and we accept the alternative  hypothesis $(\mathcal{H}_1)$. Thus, the movements of points $i$ and $j$ through vectors $\theta_i$ and $\theta_j$ are necessary to approach the distances $\Vert X_i+\theta_i-X_j-\theta_j\Vert$ to $d_{ij}$ for all $1\leq i<j\leq n$. After the statistical test, problem (P$_1$) has been solved using different values of regularization parameter $\eta$. 

\noindent Tables $\ref{tabres1}$ and $\ref{tabres2}$ show the different  values of the expectation of error $\Delta$ and the number of non-null displacements $\theta_{ik}$ after optimization for different values of $\eta$ obtained after optimization. We remark that when $\eta$ increases, the number of displacements decreases and when $\eta$ becomes too large, the number of displacements tends to zero and nothing moves.

\begin{table}[h]
\centering
\begin{minipage}[t]{.4\linewidth}
\caption*{\textbf{White corn tortilla chips}}
\begin{tabular}{|c|cc|} 
    \hline
      $\eta$&$\mathbb{E}(\Delta)$&\#($\theta_{ik} \neq 0)$ \\
    \cline{1-3} 
         $0$ &$438896$ & $407 $	\\
                       $10$ &$438892$&$ 405$\\
        $10^2$&$ 444855$  & $401 $\\
   $10^3$&$393359$ &$397 $\\     
               $10^4$&$447001$ &$360 $\\     
                	$	10^5$& 	$709809$&$ 233$\\
                    		$	2\times 10^5$& 	$1450193$&$ 190$\\
                    		$	4\times 10^5$& 	$4558334$&$ 153$\\
                    		$	6\times 10^5$& 	$8229189$&$ 121$\\
                    		$	7\times 10^5$& 	$12330002$&$ 109$\\
                        	$	10^6$& 	$19927843$ &$87 $\\
                        	$	 10^7$&$229240376$  &$ 0$\\
    \hline   
                
\end{tabular}
\caption{The values of $\mathbb{E}(\Delta)$ and the number of non-null displacements for different values of $\eta$ for  tortilla chips data set.}\label{tabres1}
\end{minipage}
\hfill 
\begin{minipage}[t]{.4\linewidth}
\caption*{\textbf{Muscadine grape juices}}
\begin{tabular}{|c|cc|} 
    \hline
      $\eta$&$\mathbb{E}(\Delta)$&\#($\theta_{ik} \neq 0)$ \\
    \cline{1-3} 
         $0$ &$127845$ & $150 $	\\
                       $10$ &$127845$&$ 149$\\
      $ 10^2$&$127845$ &$147$\\
        $ 10^3$&$127952$ &$133$\\
                      $ 4\times 10^3$&$128799$ &$128$\\
                   $ 6\times 10^3$&$130641$ &$116$\\
                           $ 8\times 10^3$&$140656$ &$98$\\
                          $10^4$&$168285$ &$86 $\\  
                           $2 \times 10^4$&$276341$ &$54 $\\  
                           $4 \times 10^4$&$551644$ &$28 $\\     
                        	$ 10^5$& 	$1336073$ &$10 $\\
                        	$	 10^6$&$ 3517594$  &$ 0$\\
                      
    \hline
\end{tabular}
\caption{The values of $\mathbb{E}(\Delta)$ and the number of non-null displacements for different values of $\eta$ for  muscadine juices data set.}\label{tabres2}
\end{minipage}
%\end{center}
\end{table}

\noindent A way to choose  the value of $\eta$ is to determine the number of misplaced points which must be moved to fit the distances. To find these misplaced points, we use the criterion of selection points $\rho_i$ presented in Section $\ref{sec1}$.  

\noindent Table $\ref{tabRho}$ shows the values of this criterion. We have seen that for a fixed real number $\varrho$  between $0$ and $1$, if $\rho_i>\varrho$ we consider $i$ as misplaced point. So, by taking $\varrho=0.1$ for the two data sets,  we can detect $3$  misplaced points for white corn tortilla chips and $4$ for muscadine grape juices which is equivalent to $3\times 37=111$ values of $\theta_{ik} \neq 0$ for tortilla chips and $4 \times 15=60$ for muscadine juices. Then, by referring to Tables $\ref{tabres1}$ and $\ref{tabres2}$, we  choose the value of $\eta$ that gives a number of displacement  close to that obtained using the criterion $\rho_i$ for tortilla chips and muscadine juices. Indeed, a value of $\eta$ equal to $7\times 10^5$  gives a number of displacements equal to $109$ displacements that is close to $111$, so we can take $\eta=7 \times 10^5$. Similarly, we choose $\eta= 2\times 10^4$ for muscadine juices data set.
Noted that by changing the value of $\varrho$, we can detect more misplaced points so this choice must be reasonable.

\begin{table}[h]
\begin{center}
\resizebox{\textwidth}{!}{\begin{tabular}{|c||c|ccccccccccc|}
  \hline
 $i$&& $1$& $2$& $3$& $4$ & $5$& $6$& $7$& $8$& $9$& $10$ &$11$\\
 \hline 
$\rho_i$& $\mathcal{D}_1$&    $ 0.0640$ &   $\textbf{0.1253}$ &   $0.0706$&   $ 0.0700 $ &  $0.0931$   & $0.0865$ &   $\textbf{0.1270}$&    $0.0811$ &$0.0828$ &   $\textbf{0.1088}$  & $0.0908$\\
    \cline{2-13}
&$\mathcal{D}_2$&$0.0960$  &  $0.0589$ &   $0.0601$ &   $0.0900$ &   $0.0966 $  & $\textbf{0.1483} $  & $0.0954$ &   $\textbf{0.1028}$ &  $ \textbf{0.132}$   & $\textbf{0.1197}$&\\
\hline
\end{tabular} }
\caption{The values of criterion $\rho_i$ for the $11$ white corn tortilla chips samples $(\mathcal{D}_1)$ and the $10$ muscadine grape juices $(\mathcal{D}_2)$. The bold values corresponds to the values where $\rho_i> 0.1$.}\label{tabRho}
\end{center}
\end{table}

\noindent Besides, if the number of desirable displacements is fixed by the user then it is not needed to compute the ratio $\rho$ and in the same way we can choose the value of $\eta$. So, the choice of $\eta$ is always related to the objective which is aimed at.

The objective of the study is to determine the acceptable attributes categories of white corn tortilla chips and muscadine grape juices. Using our method we want to determine the product characteristics that must be changed to match with the consumers preference.

\begin{table}[H]
\begin{center}
\resizebox{\textwidth}{!}{\begin{tabular}{|c|ccccccccccc|}
%\begin{tabular}{|c|ccccccccccc|} 
\hline
& & & &\textbf{White}&  \textbf{Corn }& \textbf{Tortilla}& \textbf{Chips} & & & & \\
\cline{2-12}
 & BYW &    GMG &    GUY &    MED &    MIS &    MIT &    OAK &    SAN &    TOB &    TOM & TOR \\
  \cline{2-12}
& & & & & \textbf{Flavor} & & & & & &\\
    \hline
Sweet&	$0$&	$0$&	$0$&	$1.2451$&	$0$&	$0$&	$-0.9751$&	$0$&	$1.6118$&	$1.3805$&	$0$\\
Salt&	$0$&	$1.6930$&	$0$&	$1.9880$&	$0$&	$0$&	$1.5206$&	$0$&	$0$&	$0$&	$0$\\
sour&	$0$&	$0$&	$0$&	$0$&	$0$&	$0$&	$-2.7781$&	$0.7170$&	$0$&	$0$&	$0$\\
Astringent&	$0$&	$1.8238$&	$0$&	$0$&	$-1.3989$&	$0$&	$0$&	$0$&	$0$&	$-0.6097$&	$1.8035$\\
Grain complex&	$0$&	$-1.3154$&	$1.5591$&	$2.3519$&	$0$&	$0$&	$0$&	$0$&	$0$&	$0$&	$0$\\
Raw corn&	$0$&	$-1.9829$&	$0$&	$0$&	$0$&	$0$&	$-2.8642$&	$0$&	$0$&	$0$&	$0$\\
Masa&	$0$&	$0$&	$0$&	$0$&	$1.4716$&	$0$&	$-1.6456$&	$0$&	$0$&	$-1.4004$&	$0$\\
Toasted grain&	$0$&	$-1.2323$&	$-1.3444$&	$0$&	$0$&	$0$&	$-0.4043$&	$1.4624$&	$1.0626$&	$0$&	$-1.5607$\\
Heated oil&	$0$&	$0$&	$0$&	$0$&	$0$&	$0$&	$2.0665$&	$0$&	$0$&	$-1.6899$&	$0$\\
Scorched&	$0$&	$0$&	$0$&	$0$&	$-3.1855$&	$0$&	$0$&	$0$&	$0$&	$0$&	$0$\\
Cardboard&	$0.3165$&	$1.3200$&	$0$&	$-1.9339$&	$0$&	$0$&	$0$&	$1.3330$&	$1.6492$&	$0$&	$0$\\
\hline
\hline  
     \cline{2-12}
& & & & & \textbf{Texture} & & & & & &\\
\hline
Oily/ greasy lip&	$-1.1890$&	$1.4904$&	$0$&	$0$&	$0$&	$0$&	$0$&	$0$&	$0$&	$-1.5958$&	$0$\\
Loose particles&	$-1.8266$&	$1.9488$&	$0$&	$0$&	$0$&	$0$&	$0$&	$0$&	$0$&	$0$&	$0$\\
Hardness&	$0$&	$-2.3640$&	$0$&	$0$&	$0$&	$0$&	$0$&	$0$&	$0$&	$0$&	$0$\\
Crispness&	$0$&	$1.3160$&	$0$&	$0$&	$0$&	$0$&	$1.2906$&	$0$&	$-1.2140$&	$0$&	$-1.8826$\\
Cohesiveness of mass&	$1.6169$&	$-1.2832$&	$0$&	$0$&	$0$&	$0$&	$-1.7938$&	$0$&	$0$&	$0$&	$-1.5653$\\
Roughness of mass&	$-1.2758$&	$1.3458$&	$0$&	$0$&	$0$&	$0$&	$0$&	$0$&	$0$&	$-1.8818$&	$0$\\
Moistness of mass&	$0$&	$-1.2495$&	$0$&	$0$&	$0$&	$1.7023$&	$-1.8192$&	$0$&	$0$&	$0$&	$0$\\
Moisture absorption&	$0.9027$&	$0$&	$0$&	$0$&	$-1.4304$&	$-1.7818$&	$0$&	$0$&	$0$&	$1.0449$&	$0$\\
Persistence of crisp&	$-2.2752$&	$0$&	$0$&	$0$&	$0$&	$0$&	$0.1331$&	$0$&	$-1.3031$&	$0$&	$0$\\
Toothpack&	$0.1483$&	$0$&	$0$&	$0$&	$0$&	$2.6819$&	$0$&	$0$&	$-0.5108$&	$0$&	$0$\\
 \hline
\hline  
     \cline{2-12}
& & & & & \textbf{Appearance} & & & & & &\\
\hline
Degree of Whitenes&	$0$&	$2.5915$&	$0$&	$0$&	$0$&	$1.9849$&	$0$&	$0$&	$0$&	$0$&	$0$\\
Grain Flecks&	$-0.8481$&	$1.7684$&	$0$&	$0$&	$0$&	$1.7982$&	$0$&	$0$&	$0$&	$0$&	$0$\\
Char Marks&	$0$&	$-1.1166$&	$0$&	$0$&	$0$&	$0$&	$0$&	$0$&	$1.5467$&	$0$&	$1.5647$\\
Micro Surface Particles&	$0$&	$1.6609$&	$0$&	$0$&	$0$&	$0$&	$0$&	$0$&	$0$&	$-1.9241$&	$0$\\
Amount of Bubbles&	$0$&	$0$&	$0$&	$0$&	$0$&	$0$&	$0$&	$0$&	$-1.7983$&	$0$&	$1.0500$\\
\hline
\end{tabular}}
\caption{The values of displacements $\theta_{ik}$ where $i=1,\dots,11$ is the corn tortilla chip sample and $k$ is the attributes of flavor, texture, appearance categories. Only the detected descriptive attributes are given in this table.}\label{tabTortilla}
\end{center}
\end{table}

\begin{table}[H]
\begin{center}
\resizebox{\textwidth}{!}{\begin{tabular}{|c|cccccccccc|}
%\begin{tabular}{|c|ccccccccccc|} 
\hline
& & & &\textbf{Muscadine }&  \textbf{grape }& \textbf{juices}&  & && \\
\cline{2-11}
& BB&	CA&	GV&	IS&	ME&	CR&	CW&	SH&	SUM&	SUP\\
\cline{2-11}
& & & & & \textbf{Basic tastes} & & & & &\\
    \hline
Sweet& $-1.1850$&	$0.5018$&	$0.6948$&	$0$&	$0$&	$0$&	$0$&	$0$&	$0$&	$0$\\
Sour& $-1.0612$&	$0.2443$&	$0$&	$0$&	$0$&	$-1.0543$&	$0$&	$0$&	$0$&	$0$\\

\hline
\hline  
     \cline{2-11}
& & & & & \textbf{Aromatics} & & & & & \\
\hline
Cooked muscadine&	$-0.1948$&	$0$&	$0$&	$0$&	$0$&	$-0.4885$&	$0.4100$&	$0.5171$&	$0$&	$0$\\
Cooked grape&	$0$&	$-0.5705$&	$0$&	$0$&	$0$&	$-0.6065$&	$0$&	$0$&	$-0.8350$&	$0$\\
Musty&	$0$&	$-0.6460$&	$0$&	$-0.4153$&	$0.7202$&	$0$&	$0.0497$&	$0$&	$0$&	$0$\\
Green/unripe&	$0$&	$0.3416$&	$0$&	$0$&	$0$&	$0$&	$0$&	$-0.3443$&	$0$&	$0$\\
Floral&	$0$&	$0.4433$&	$0.4345$&	$0$&	$-0.9344$&	$0$&	$-0.5002$&	$0.4070$&	$-0.9426$&	$0$\\
Apple/Pear&	$-1.0636$&	$0$&	$0.5652$&	$0$&	$0$&	$-0.4278$&	$0$&	$0$&	$0$&	$0.6464$\\
Fermented&	$-1.2588$&	$0$&	$0$&	$0$&	$0$&	$-0.7521$&	$0$&	$0$&	$0.3882$&	$0$\\
Metallic&	$0$&	$0.6494$&	$0$&	$-1.1046$&	$0.6358$&	$0.6547$&	$0$&	$0.6520$&	$-0.7677$&	$0$\\
 \hline
\hline  
     \cline{2-11}
& & & & & \textbf{Feeling factors} & & & & & \\
\hline
Astringent& $0.3984$&         $0$&         $0 $&  $-0.8488$        & $0$ &   $0.3228$&    $0.7118$&         $0$&  $0$ &   $0.4081$\\
\hline
\end{tabular}}
\caption{The values of displacements $\theta_{ik}$ where $i=1,\dots,10$ is the $10$ muscadine grape juices and $k$ is the attributes of basic tastes, aromatics, feeling factors categories. Only the detected descriptive attributes are given in this table.} \label{tabRaisin}
\end{center}
\end{table}

As we have seen we are interested in our method to fit the characteristics of product to the consumer acceptance rates, so null displacements can be interpreted as consumers satisfaction. Globally, for each categories of attributes, we can calculate the proportion of the null displacements. This proportion is given by: 
\[p_{_\text{C}}=\frac{\text{number of } (\theta_{ik}=0) \text{ in category } C_{_\mathcal{D}} }{\text{total number of } \theta_{ik} \text{ in category } C_{_{\mathcal{D}}}  }\]
where $\mathcal{D}=\{\mathcal{D}_1, \mathcal{D}_2\}$, C$_{_{\mathcal{D}_1}}\in\{\text{Flavor, Texture, Appearance}\}$ is the category of white corn torilla chips $(\mathcal{D}_1)$ and C$_{_{\mathcal{D}_2}}\in\{ \text{ Basic tastes, Aromatics, Fellings factors }\}$ is the category of muscadine grape juices $(\mathcal{D}_2)$. For each data set, the proportion $p_{_{\text{C}}}$ is calculated from Table $\ref{tabTortilla}$ or Table $\ref{tabRaisin}$.   
\begin{table}[H]
\centering
\begin{minipage}[t]{.48\linewidth}
\begin{tabular}{|c|c|}
\hline
Tortilla attributes & \\ Categories & $P_{_C}$\\
  \hline
Flavor & $0.71 $\\
Texture &    $0.72$\\
Appearance &    $0.78$\\
\hline
\end{tabular} 
\end{minipage}
\hfill 
\begin{minipage}[t]{.48\linewidth}
\begin{tabular}{|c|c|}
\hline
Muscadine attributes&\\ Categories & $P_{_C}$\\
  \hline
Basic tastes & $0.70$\\
Aromatics &    $0.60$\\
Feeling factors &    $0.50$\\
\hline
\end{tabular}
\end{minipage}
\caption{The proportion values for different attributes categories for white corn tortilla chips and muscadine grape juices.}\label{tabprop}
%\end{center}
\end{table}

\noindent The proportion values for different categories for the two data sets are given in the Table $\ref{tabprop}$. For white corn tortilla chips, the flavor, texture and appearance attributes categories have approximatively  the same proportion values that it is equal in average to $0.75$. So, $25\%$ of the characteristics products must be moved to make the products characteristics as acceptable as possible by the consumers. Thus, we can conclude that the overall characteristics of tortilla chips are well accepted by the consumers.  Concerning muscadine grape juices, we notice that the basic tastes attributes are the most acceptable attributes among the two other attributes categories as just $30\%$ of the attributes must be changed to fit the consumer scores whereas, $40\%$ and $50\%$ of the aromatics and feeling factors attributes categories  must respectively be changed to make these characteristics acceptable by the consumers.  

\subsubsection{Simulation results}
Algorithm $1$ has been applied to the two data sets. 
The constant $c$ multiplied by covariance matrix of data is taken equal to $10^{-3}$ for the two data sets.

\noindent Figure $\ref{figSimAlgo1}$ shows that the minimal value of  error $\Delta$ obtained by simulation is equal to $18129$ after $150$ iterations for the white corn tortilla chips and $4838$ after $50 $ iterations for the muscadine grape juices. The results of displacements for these two data sets are given in Figure $\ref{figDepAlgo1}$ and $\ref{figDepAlgo1Rai}$. In these figures, we compare the displacements obtained by simulation with those obtained by optimization of problem P$_0$ (without penalization term) as the parsimonious choice of displacements is not taken into account in the simulation algorithm.

\begin{figure}[!ht]
\centerfloat 
      \begin{subfigure}[b]{8cm}
         \makebox[\textwidth]{\includegraphics[width=8cm,height=5.33cm]{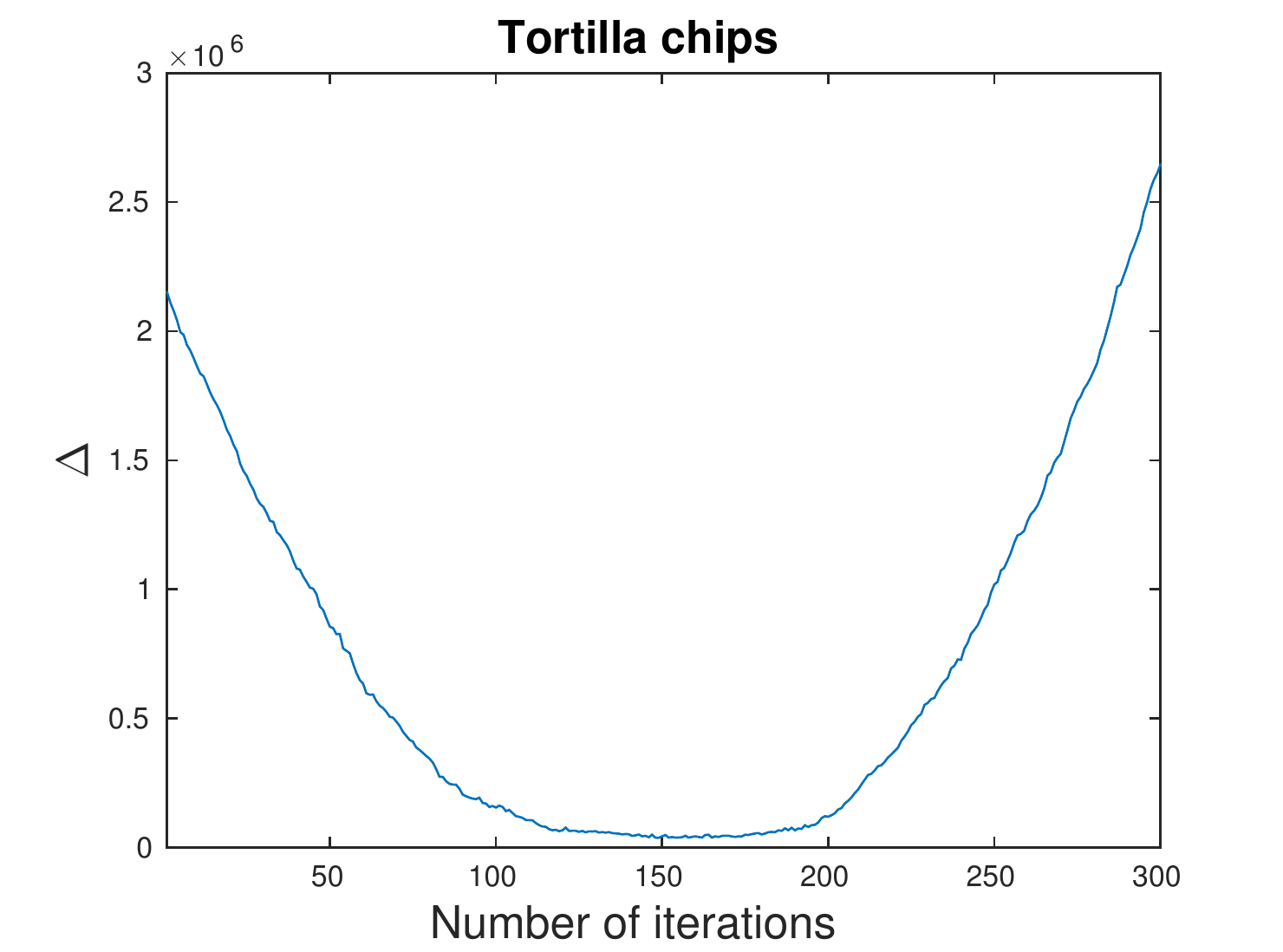}}
     \end{subfigure}
    \begin{subfigure}[b]{8cm}
     \makebox[\textwidth]{    \includegraphics[width=8cm,height=5.33cm]{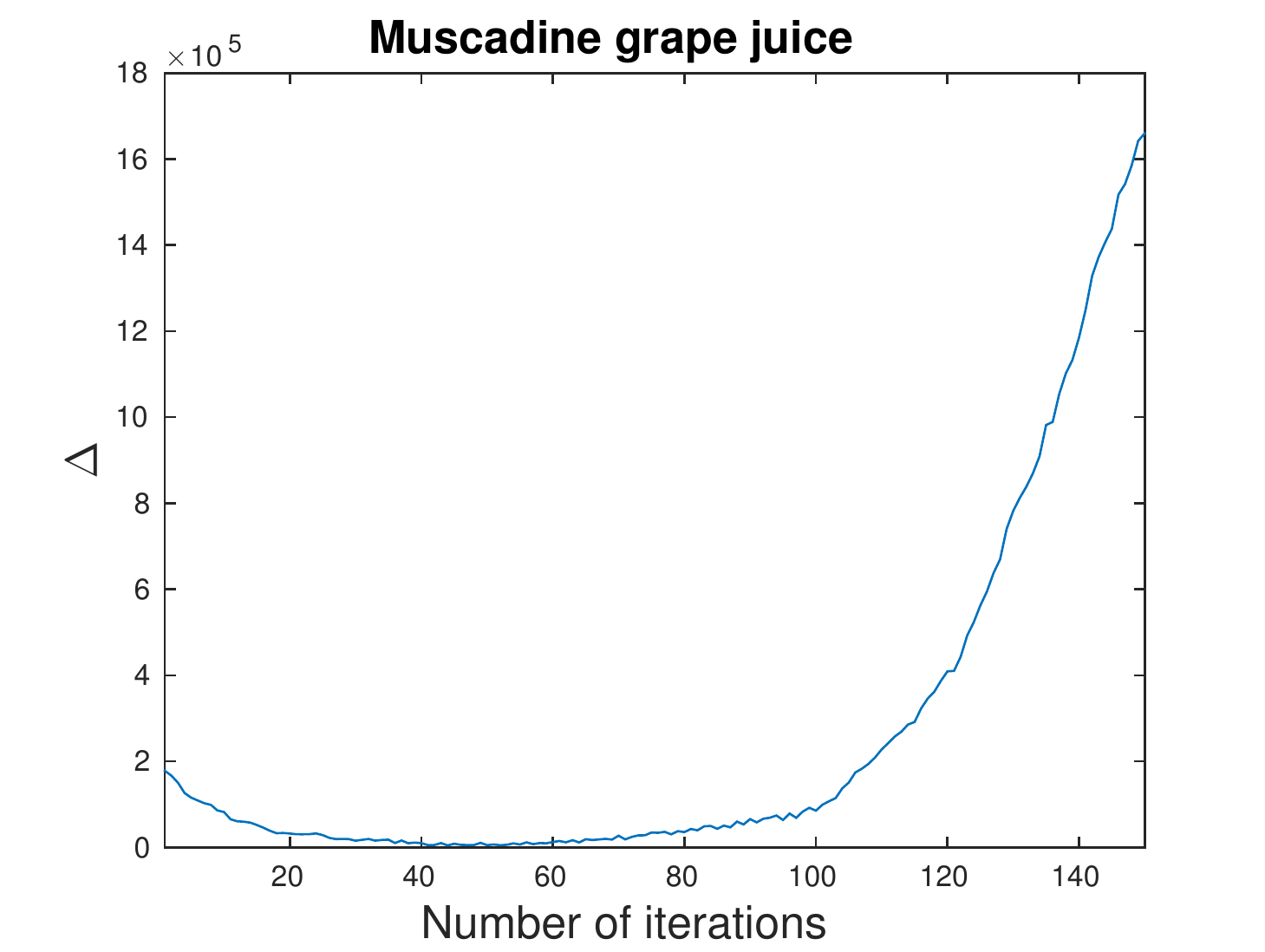}}
        \end{subfigure}\\
    \caption{trace plot of the error $\Delta$ using algorithm $1$ for white corn tortilla chips and muscadine grape juices data sets. }\label{figSimAlgo1}
\end{figure}

This comparison between optimization and simulation results indicates that using simulation technique we have succeeded in finding similar displacements with a value of $\Delta$ smaller than the value of the expectation calculated in the independent case. What is interesting here that the displacement obtained after simulation is not very different for most of the points.  The important displacements obtained in the optimization and simulation results are close. So by taking some threshold to detect the important displacements, we can detect the same important displacements in two different ways.  

\begin{figure}[!h]
    \centering
    \begin{subfigure}[b]{10cm}
       \makebox[\textwidth]{  \includegraphics[width=16cm, height=8cm]{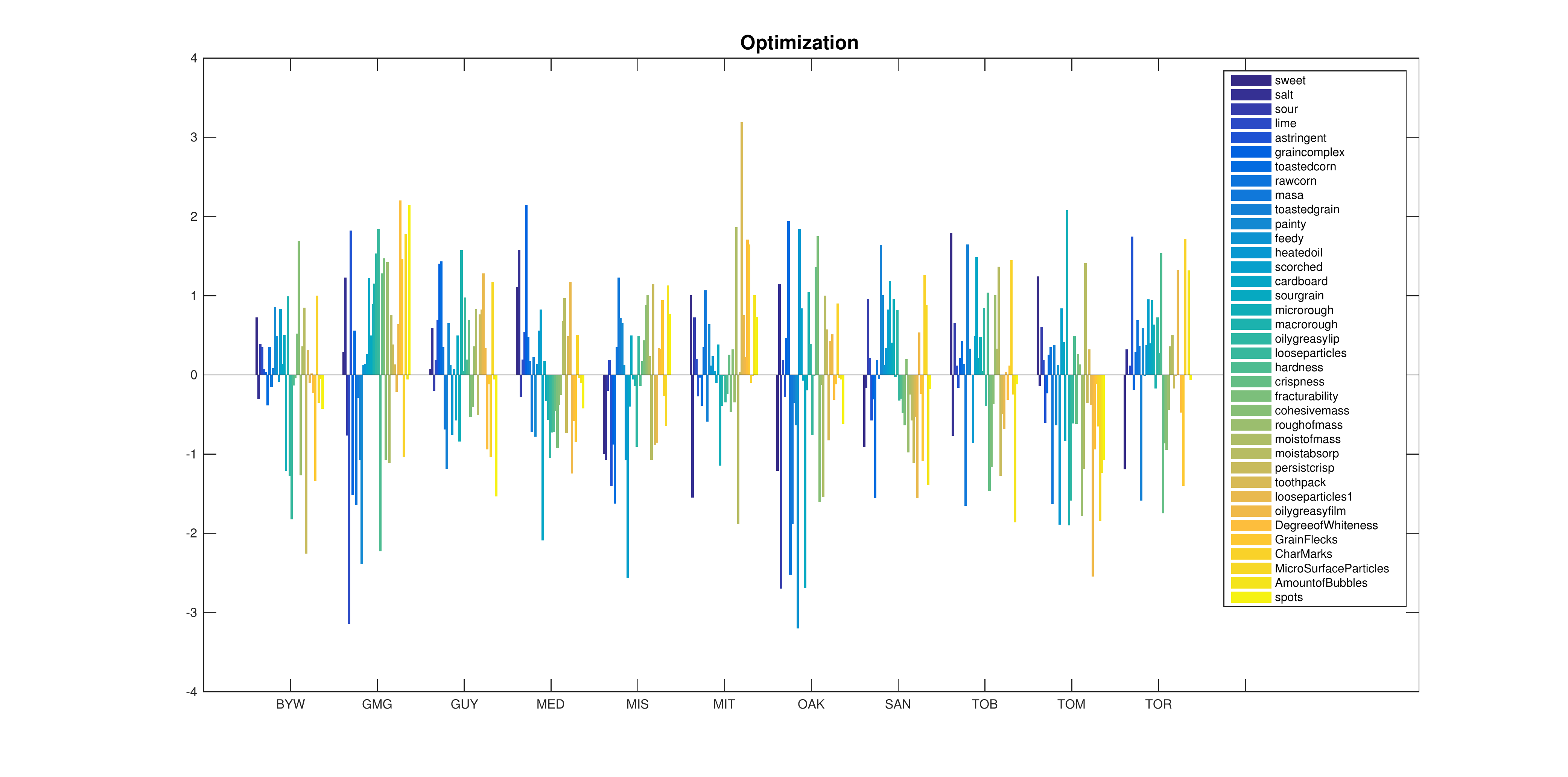}}
    \end{subfigure}
    \begin{subfigure}[b]{10cm}
     \makebox[\textwidth]{    \includegraphics[width=16cm,height=8cm]{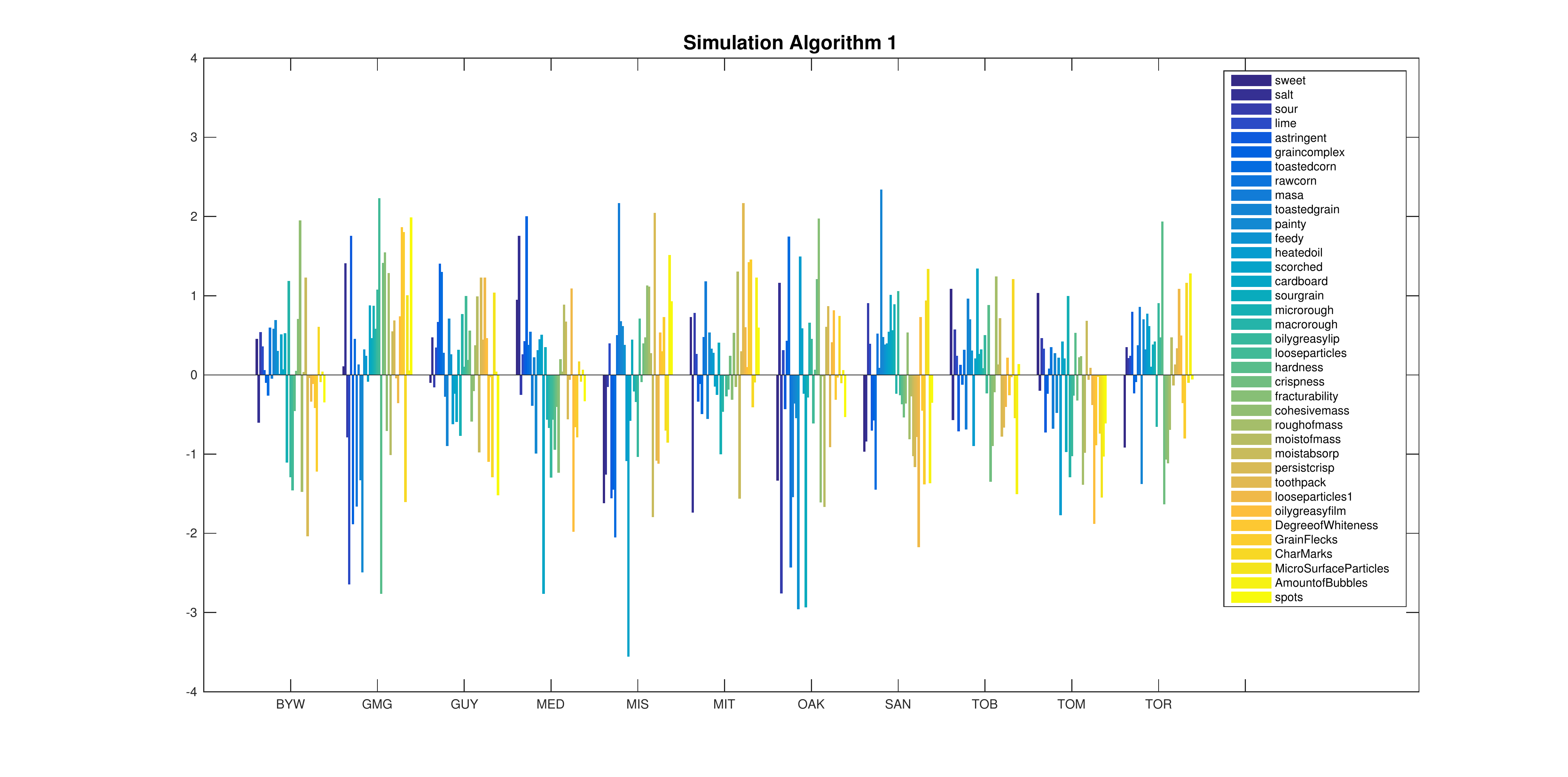}}        
    \end{subfigure} 
\caption{The displacements for the different attributes of the $11$ tortilla chip samples obtained by optimization and simulation. }\label{figDepAlgo1}
\end{figure}

\begin{figure}[!h]
    \centering
    \begin{subfigure}[b]{10cm}
       \makebox[\textwidth]{  \includegraphics[width=16cm, height=8cm]{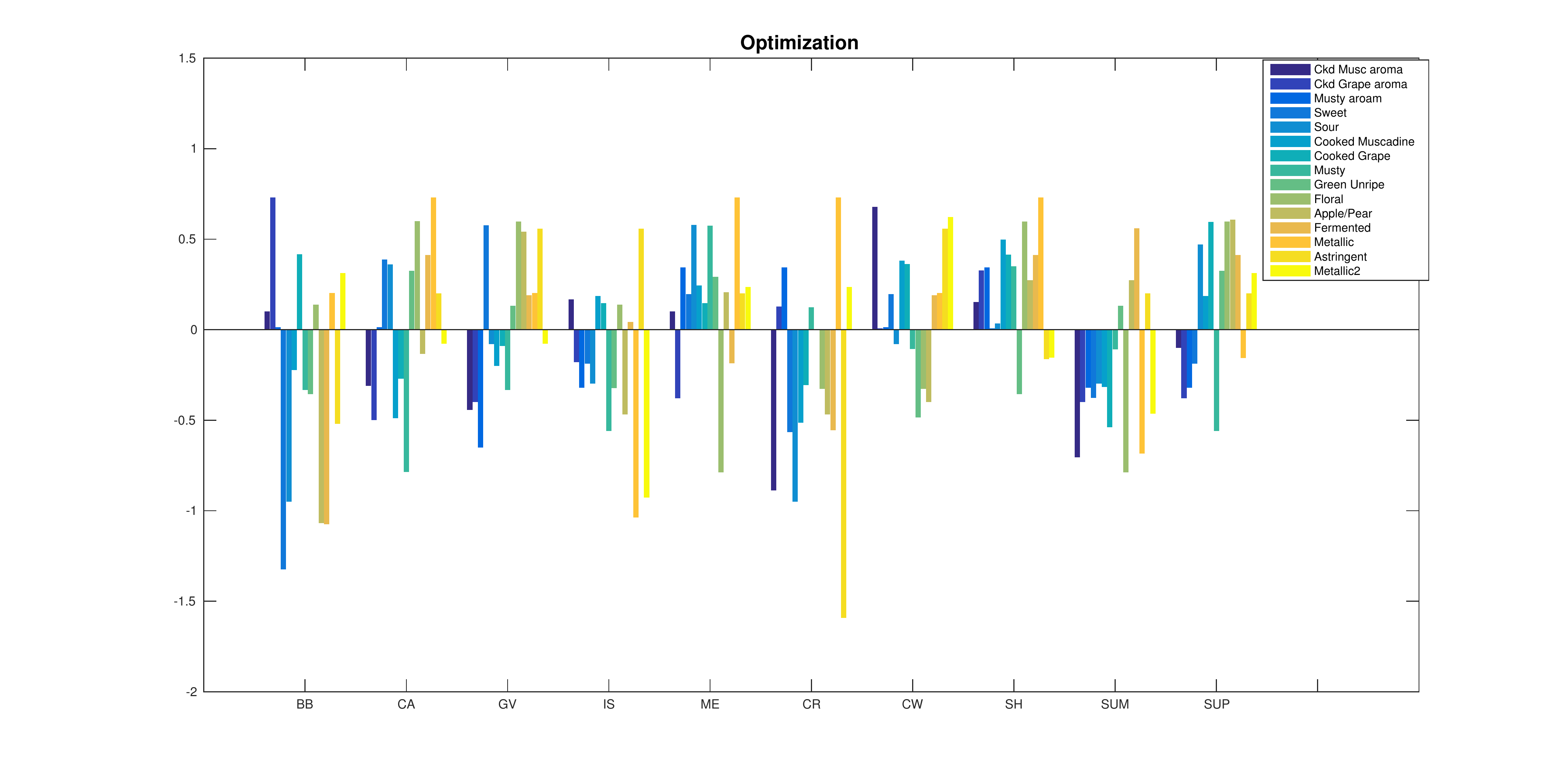}}
    \end{subfigure}
    \begin{subfigure}[b]{10cm}
     \makebox[\textwidth]{    \includegraphics[width=16cm,height=8cm]{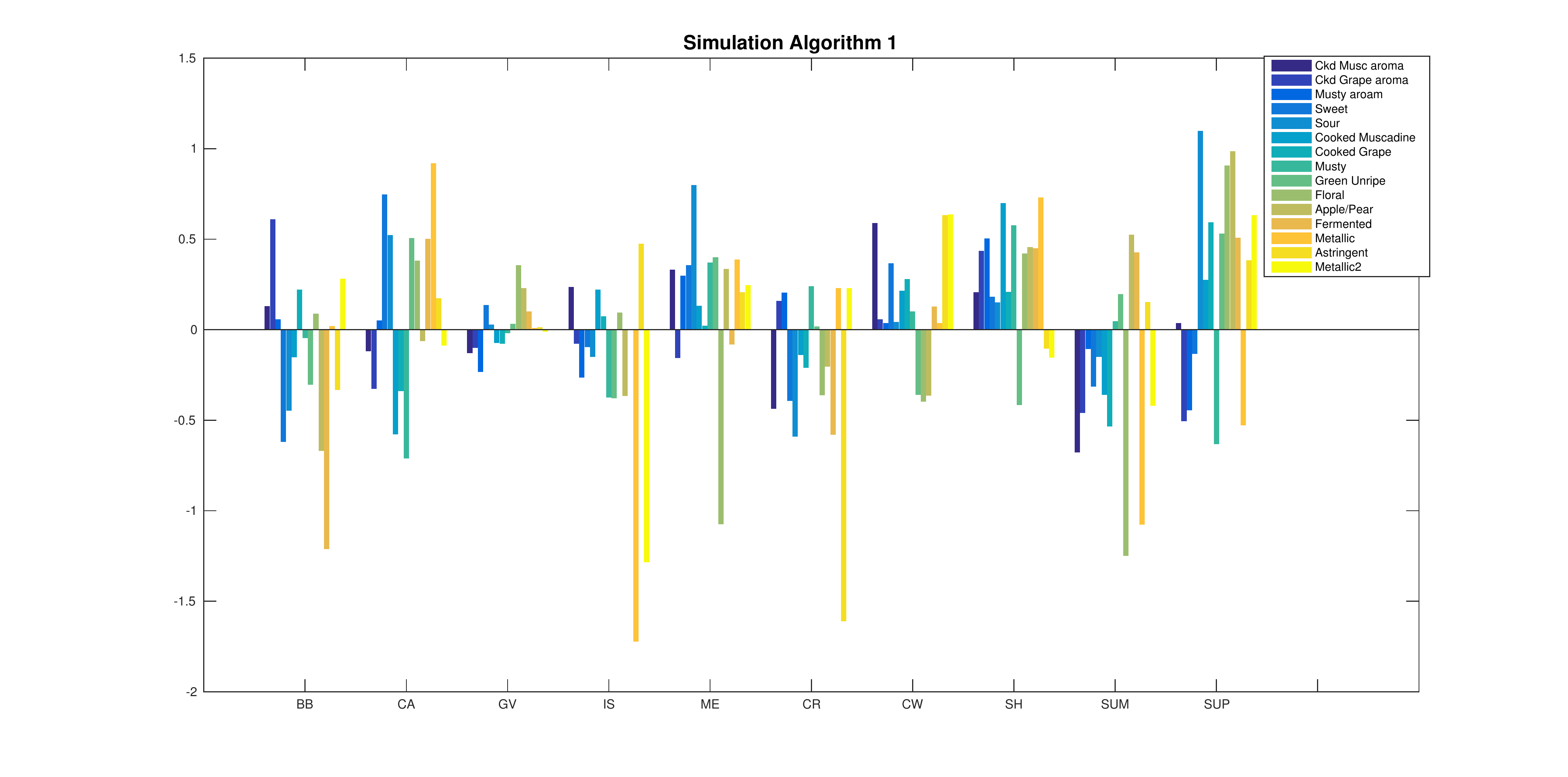}}        
    \end{subfigure}
\caption{The displacements for the different attributes of the $10$ muscadine grape juices obtained using optimization and simulation.}\label{figDepAlgo1Rai}
\end{figure}

\subsection{Discussion}
Several papers in  food quality and preference domains study the relation between consumers preference and the characteristics of products in order to find the must acceptable characteristics of these products by the consumers \cite{SS1}. Preference mapping techniques can be applied using just the consumers rates for each product, we address 'internal preference mapping', or by taking an additionally data describing the products with a series of criteria, we address 'external preference mapping' \cite{SS2}. As our method based on the fitting of two matrices so the comparative with external preference mapping should be more explicative. The main objective to external preference mapping is to fit the individual consumer rates to the products configuration  by using one of the different regression models among which the quadratic surface model is popular \cite{SS3}. Meullenet in his article \cite{Tortila2} indicates that the preference mapping is determined by determining a  partial least squares (PLS) regression model and the application of Jackknife optimization and this model is used to predict the consumer attributes acceptance. So, the prediction here is related to something subjective whereas our method gives displacements that can be interpretable without introducing subjective effects. Moreover, the displacement of   attributes for each product can be interpretable alone or by taking all the categories of attributes. Thus, using our model we can find a new and simple  methodology to determine the preference mapping of products.

\section{Conclusion}\label{SecConc}
We have presented a new model of multidimensional fitting method by taking into account  random effects occurred. First, the random model of MDF  with a penalized form is presented. Second, a statistical test indicates the significance of the displacements of the points. Then, optimization and simulation algorithms are developed to find these displacements. The application of this method in the sensometrics domain shows the simplest explanation of the sensory  profiles of products according the consumers preference.  Finally, MDF in their deterministic and random model can be also used when the data contains missing data. A 
pretreatment of this data before the application of MDF method to replace these missing values will not impinge the results. Finally, more studies can be developed to adapt MDF method to qualitative and functional data.

\newpage
\section*{References}

\newpage

\appendix
\makeatletter
\def\@seccntformat#1{Appendix~\csname the#1\endcsname:\quad}
\makeatother
\begin{appendices}
As we have seen, the expectation and the variance of the error $\Delta$ are involved in the objective function of problem (P$_1$) and in the statistical test. So, we want to calculate these two quantities for any value of $\theta_1,\dots,\theta_n$.

\section{Five Lemmas}\label{App1}
\noindent We have:
\begin{eqnarray}
e_{ij}&=& \left(d_{ij}-a\lVert X_{i}+\theta_{i}+\varepsilon_{i}-X_{j}-\theta_{j}-\varepsilon_j \rVert_{_2} \right)^2.\nonumber
\end{eqnarray}
\noindent By developing the expression of $e_{ij}$, we obtain: 
\begin{eqnarray}
e_{ij}=d_{{ij}}^2+a^2\lVert X_{i}+\theta_{i}+\varepsilon_{i}-X_{j}-\theta_{j}-\varepsilon_j \rVert_{_2}^2-2\ a \ d_{ij} \lVert X_{i}+\theta_{i}+\varepsilon_{i}-X_{j}-\theta_{j}-\varepsilon_j \rVert_{_2} \label{eqdef1}
\end{eqnarray} 

\noindent We want to present five lemmas that will help us in the calculation of the expectation and variance of $\Delta$.

\begin{lemma}\label{lemma1}
Let $N_{ij}$ be a random variable defined by:
\begin{eqnarray}
 \displaystyle N_{ij}=\sum_{k=1}^{p} \left({\varepsilon_{ik}-\varepsilon_{jk}}\right)^2, \forall 1\leq i<j\leq n \label{eqt1}
\end{eqnarray}
 where the $p$ components of vectors $\varepsilon_i$, for all $i=1,\dots,n $ are  independent and identically normally distributed random variables such that  $\varepsilon_{ik}\rightsquigarrow \mathcal{N}(0,\sigma^2)$ for $k=1,\dots,p$. Then, we have: \[\mathbb{E}(N_{ij})=2\sigma^2 p \text{ et } \mathbb{V}ar(N_{ij})=8\sigma^4 p.\]
\end{lemma}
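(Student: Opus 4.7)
The plan is to rewrite $N_{ij}$ so that it becomes a scaled chi-squared random variable, and then read off its mean and variance from the standard properties of the chi-squared distribution.

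First I would note that for fixed $k$, the random variable $\varepsilon_{ik} - \varepsilon_{jk}$ is a difference of two independent $\mathcal{N}(0,\sigma^2)$ variables, hence $\varepsilon_{ik} - \varepsilon_{jk} \sim \mathcal{N}(0, 2\sigma^2)$. Setting $Z_{ijk} = (\varepsilon_{ik} - \varepsilon_{jk})/(\sqrt{2}\,\sigma)$ gives a standard normal. Crucially, because the full collection $\{\varepsilon_{ik}\}_{i,k}$ is mutually independent over the index $k$, the variables $Z_{ij1}, \dots, Z_{ijp}$ are independent across $k$ (even though $Z_{ijk}$ depends on two of the $\varepsilon$'s, distinct $k$'s involve disjoint underlying noise components). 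Therefore
\[
\frac{N_{ij}}{2\sigma^2} \;=\; \sum_{k=1}^{p} Z_{ijk}^{2} \;\rightsquigarrow\; \chi^{2}(p).
\]

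Next I would invoke the well-known moments of the chi-squared distribution: if $Y \sim \chi^{2}(p)$ then $\mathbb{E}(Y) = p$ and $\mathbb{V}ar(Y) = 2p$. Applying this with $Y = N_{ij}/(2\sigma^{2})$, linearity of expectation gives $\mathbb{E}(N_{ij}) = 2\sigma^{2} p$, and the scaling identity $\mathbb{V}ar(cY) = c^{2}\mathbb{V}ar(Y)$ with $c = 2\sigma^{2}$ yields $\mathbb{V}ar(N_{ij}) = 4\sigma^{4} \cdot 2p = 8\sigma^{4} p$.

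There is no real obstacle here; the only point requiring a line of care is the independence argument for the $Z_{ijk}$'s across $k$ (so that the sum of squares is genuinely $\chi^{2}(p)$ rather than some correlated quadratic form). Once that is stated cleanly, the rest is a one-line consequence of standard facts, so I would expect the whole proof to be short.
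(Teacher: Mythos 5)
Your proof is correct and follows essentially the same route as the paper: standardize $\varepsilon_{ik}-\varepsilon_{jk}\sim\mathcal{N}(0,2\sigma^2)$, recognize $N_{ij}/(2\sigma^2)$ as a $\chi^2_p$ variable, and scale the standard moments. Your extra remark on independence across $k$ is a welcome clarification but does not change the argument.
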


\begin{proof}
\noindent As, $\varepsilon_{ik}\rightsquigarrow \mathcal{N}(0,\sigma^2)$ and the vectors  $\varepsilon_i$ and $\varepsilon_j$ are independents, we have $\varepsilon_{ik}-\varepsilon_{jk}\rightsquigarrow \mathcal{N}(0,2\sigma^2)$. Thus,
$\displaystyle \sum_{k=1}^{p} \left(\frac{\varepsilon_{ik}-\varepsilon_{jk}}{\sqrt{2}\sigma}\right)^2\rightsquigarrow \chi^2_p$, and  consequently:
\[ \displaystyle \mathbb{E}\left(\sum_{k=1}^{p} \left(\frac{\varepsilon_{ik}-\varepsilon_{jk}}{\sqrt{2}\sigma}\right)^2\right)=p \text{ et } \displaystyle \mathbb{V}ar\left(\sum_{k=1}^{p} \left(\frac{\varepsilon_{ik}-\varepsilon_{jk}}{\sqrt{2}\sigma}\right)^2\right)=2p.\]
So, we obtain:
\begin{eqnarray}
 \mathbb{E}\left(N_{ij} \right)&=&2\sigma^2 p\nonumber\\
  \mathbb{V}ar\left(N_{ij} \right)&=&8\sigma^4 p.\nonumber
 \end{eqnarray}
\end{proof}

\begin{lemma}\label{lemma2}
Let $A_{ij}$ be a random variable defined by: 
\begin{eqnarray}
A_{ij}=\lVert X_{i}+\theta_{i}+\varepsilon_{i}-X_{j}-\theta_{j}-\varepsilon_j \rVert_{_2} \label{eqt2}
\end{eqnarray}
with $\varepsilon_{ik}\rightsquigarrow \mathcal{N}(0,\sigma^2)$. Then, we have:
\begin{eqnarray}
\mathbb{E}(A_{ij})&=&\sqrt{2}\sigma \mu_{ij}\nonumber\\
 \mathbb{V}ar(A_{ij})&=&{2}\sigma^2(p+\lambda_{ij}^{2}-\mu_{ij}^{2})\nonumber\\
\mathbb{E}(A_{ij}^2)&=&{2}\sigma^2 (p+\lambda_{ij}^{2})\nonumber\\ 
\mathbb{V}ar(A_{ij}^2)&=&8\sigma^4(p+2\lambda_{ij}^{2})\nonumber\\ 
 \mathbb{E}\left(A_{ij}^3\right)&=&6\sigma^3 \sqrt{\pi} \mathcal{L}_{\frac{3}{2}}^{\frac{p}{2}-1}\left(-\frac{\lambda_{ij}^{2}}{2}\right)\nonumber\\
  \mathbb{E}\left(A_{ij}^4\right)&=&4\sigma^4(p+\lambda_{ij}^{2})^2+8\sigma^4(p+2\lambda_{ij}^{2})\nonumber
\end{eqnarray}

\noindent where $\displaystyle \mu_{ij}= \sqrt{\frac{\pi}{2}} \mathcal{L}_{\frac{1}{2}}^{\frac{p}{2}-1}\left(-\frac{\lambda_{ij}^{2}}{2}\right)$, $\displaystyle \lambda_{ij}=\sqrt{\sum_{k=1}^p \left(\frac{ x_{ik}+\theta_{ik}-x_{jk}-\theta_{jk}}{\sqrt{2}\sigma}\right)^2}$ and $\mathcal{L}_\nu^{(\alpha)}(x)$ is the generalized Laguerre polynomial.
\end{lemma}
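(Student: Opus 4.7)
The plan is to recognize that, once properly standardized, the random quantity $A_{ij}$ is (up to a scaling factor) a non-central chi random variable, so the whole lemma follows by reading off known moments of the non-central chi and non-central chi-squared distributions, with a Kummer-to-Laguerre conversion for the odd moments.

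First I would standardize. For each coordinate $k$, the independence of $\varepsilon_i$ and $\varepsilon_j$ gives $\varepsilon_{ik}-\varepsilon_{jk}\sim\mathcal{N}(0,2\sigma^{2})$, so
\[
Z_k \;:=\; \frac{x_{ik}+\theta_{ik}-x_{jk}-\theta_{jk}+\varepsilon_{ik}-\varepsilon_{jk}}{\sqrt{2}\,\sigma}
\;\sim\;\mathcal{N}(m_k,1),\qquad m_k=\frac{x_{ik}+\theta_{ik}-x_{jk}-\theta_{jk}}{\sqrt{2}\,\sigma},
\]
and the $Z_k$'s are independent. Consequently $A_{ij}^{2}/(2\sigma^{2})=\sum_{k=1}^{p}Z_{k}^{2}\sim \chi_{p}^{2}(\lambda_{ij}^{2})$ with $\lambda_{ij}^{2}=\sum_{k}m_{k}^{2}$, exactly the parameter defined in the statement, and $A_{ij}/(\sqrt{2}\sigma)\sim \chi_{p}(\lambda_{ij})$. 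This reduction is essentially the same computation used in Lemma~\ref{lemma1}, so it should be routine.

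Second, I would read off the even-order moments from the non-central chi-squared distribution. The standard formulas $\mathbb{E}\!\left(\chi_{p}^{2}(\lambda^{2})\right)=p+\lambda^{2}$ and $\mathbb{V}\mathrm{ar}\!\left(\chi_{p}^{2}(\lambda^{2})\right)=2(p+2\lambda^{2})$ immediately yield $\mathbb{E}(A_{ij}^{2})=2\sigma^{2}(p+\lambda_{ij}^{2})$ and $\mathbb{V}\mathrm{ar}(A_{ij}^{2})=8\sigma^{4}(p+2\lambda_{ij}^{2})$, and the fourth moment follows from $\mathbb{E}(A_{ij}^{4})=\mathbb{V}\mathrm{ar}(A_{ij}^{2})+\mathbb{E}(A_{ij}^{2})^{2}$.

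The main obstacle is the odd-order moments, which is where the Laguerre polynomial enters. I would rely on the classical moment formula for $X\sim \chi_{p}(\lambda)$,
\[
\mathbb{E}(X^{r})=2^{r/2}\,\frac{\Gamma\!\left(\tfrac{p+r}{2}\right)}{\Gamma\!\left(\tfrac{p}{2}\right)}\;{}_{1}F_{1}\!\left(-\tfrac{r}{2};\tfrac{p}{2};-\tfrac{\lambda^{2}}{2}\right),
\]
and convert the confluent hypergeometric factor to a generalized Laguerre polynomial via the identity ${}_{1}F_{1}(-\nu;\alpha+1;x)=\Gamma(\nu+1)\Gamma(\alpha+1)\mathcal{L}_{\nu}^{\alpha}(x)/\Gamma(\nu+\alpha+1)$. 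Specializing at $r=1$ produces $\mathbb{E}(A_{ij})=\sqrt{2}\sigma\cdot\sqrt{\pi/2}\,\mathcal{L}_{1/2}^{p/2-1}(-\lambda_{ij}^{2}/2)=\sqrt{2}\sigma\mu_{ij}$ after the Gamma-function factors cancel (using $\Gamma(3/2)=\sqrt{\pi}/2$), and at $r=3$ produces $\mathbb{E}(A_{ij}^{3})=(2\sigma^{2})^{3/2}\cdot 2\sqrt{2}\,\Gamma(5/2)\,\mathcal{L}_{3/2}^{p/2-1}(-\lambda_{ij}^{2}/2)=6\sqrt{\pi}\sigma^{3}\mathcal{L}_{3/2}^{p/2-1}(-\lambda_{ij}^{2}/2)$, using $\Gamma(5/2)=3\sqrt{\pi}/4$. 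Finally, $\mathbb{V}\mathrm{ar}(A_{ij})=\mathbb{E}(A_{ij}^{2})-\mathbb{E}(A_{ij})^{2}=2\sigma^{2}(p+\lambda_{ij}^{2}-\mu_{ij}^{2})$ closes the lemma. The principal care needed is bookkeeping the Gamma ratios and verifying that the non-integer-order Laguerre identity is applied correctly; everything else is a direct invocation of standard non-central chi/chi-squared moment formulas.
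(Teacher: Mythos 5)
Your proposal is correct and follows essentially the same route as the paper: standardize so that $A_{ij}/(\sqrt{2}\sigma)$ is a non-central chi variable (and $A_{ij}^2/(2\sigma^2)$ a non-central chi-squared), then read off the first through fourth moments and rescale. The only difference is cosmetic: the paper quotes the Laguerre-form moments of the non-central chi directly, whereas you rederive them from the ${}_1F_1$ moment formula via the Kummer-to-Laguerre identity, and your Gamma-factor bookkeeping checks out.
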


\begin{proof}
 $A_{ij}$ is a random variable defined by: \[A_{ij}=\lVert X_{i}+\theta_{i}+\varepsilon_{i}-X_{j}-\theta_{j}-\varepsilon_j \rVert_2=\sqrt{\sum_{k=1}^p \left( x_{ik}+\theta_{ik}+\varepsilon_{ik}-x_{jk}-\theta_{jk}-\varepsilon_{jk}\right)^2}.\]

\noindent The random variable $\displaystyle x_{ik}+\theta_{ik}+\varepsilon_{ik}-x_{jk}-\theta_{jk}-\varepsilon_{jk}$ is normally distributed as $ \mathcal{N}(x_{ik}+\theta_{ik}-x_{jk}-\theta_{jk},2\sigma^2)$
which implies that the random variable 
\[\displaystyle \sqrt{\sum_{k=1}^p \left(\frac{ x_{ik}+\theta_{ik}+\varepsilon_{ik}-x_{jk}-\theta_{jk}-\varepsilon_{jk}}{\sqrt{2}\sigma}\right)^2}\]  is distributed according to the non-central chi distribution with $p$ degrees of freedom and $\lambda_{ij}$ the non-centrality parameter that is related to the mean of the random variable by:  $\displaystyle \lambda_{ij}=\sqrt{\sum_{k=1}^p \left(\frac{ x_{ik}+\theta_{ik}-x_{jk}-\theta_{jk}}{\sqrt{2}\sigma}\right)^2}$.   

\noindent Then, we obtain:
\[\displaystyle \mathbb{E}\left( \sqrt{\sum_{k=1}^p \left(\frac{ x_{ik}+\theta_{ik}+\varepsilon_{ki}-x_{jk}-\theta_{jk}-\varepsilon_{jk}}{\sqrt{2}\sigma}\right)^2}\right)= \frac{1}{\sqrt{2}\sigma}\mathbb{E}(A_{ij}).\]

\noindent Recall that the expectation of non-central chi distribution $\chi_p(\lambda_{ij})$ is given by:
\[\sqrt{\frac{\pi}{2}} \mathcal{L}_{\frac{1}{2}}^{\frac{p}{2}-1}\left(-\frac{\lambda_{ij}^2}{2}\right).\]

\noindent We note \[\mu_{ij}=\sqrt{\frac{\pi}{2}} \mathcal{L}_{\frac{1}{2}}^{\frac{p}{2}-1}\left(-\frac{\lambda_{ij}^2}{2}\right).\] 
Then, we have \[\mathbb{E}(A_{ij})=\sqrt{2}\sigma \mu_{ij}.\]

\noindent Moreover, we have:
\[\displaystyle \mathbb{V}ar\left( \sqrt{\sum_{k=1}^p \left(\frac{ x_{ik}+\theta_{ik}+\varepsilon_{ik}-x_{jk}-\theta_{jk}-\varepsilon_{jk}}{\sqrt{2}\sigma}\right)^2}\right)=\frac{1}{{2}\sigma^2}\mathbb{V}ar(A_{ij})\cdot\]
Recall that the variance of non-central chi distribution $\chi_p(\lambda_{ij})$ is given by:
\[p+\lambda_{ij}^{2}-\mu_{ij}^{2}\]
that gives: \[\mathbb{V}ar(A_{ij})={2}\sigma^2(p+\lambda_{ij}^{2}-\mu_{ij}^{2}).\]
\\

\noindent Concerning the calculation of moments of order  $3$ and $4$, we have: 
\[\displaystyle \mathbb{E}\left( \sqrt{\sum_{k=1}^p \left(\frac{ x_{ik}+\theta_{ik}+\varepsilon_{ik}-x_{jk}-\theta_{jk}-\varepsilon_{jk}}{\sqrt{2}\sigma}\right)^2}\right)^3= \frac{1}{2\sqrt{2}\sigma^3}\mathbb{E}(A_{ij}^3)\]

\noindent and
\[\displaystyle \mathbb{E}\left( \sqrt{\sum_{k=1}^p \left(\frac{ x_{ik}+\theta_{ik}+\varepsilon_{ik}-x_{jk}-\theta_{jk}-\varepsilon_{jk}}{\sqrt{2}\sigma}\right)^2}\right)^4= \frac{1}{4\sigma^4}\mathbb{E}(A_{ij}^4)\cdot\]
For a non-central chi distribution, the moments $3$ and $4$ are, respectively, given by:
\[3\sqrt{\frac{\pi}{2}} \mathcal{L}_{\frac{3}{2}}^{\frac{p}{2}-1}\left(-\frac{\lambda_{ij}^{2}}{2}\right) \text{ and } (p+\lambda_{ij}^{2})^2+2(p+2\lambda_{ij}^{2}).\]
 
\noindent Then, we obtain: 
\[\mathbb{E}(A_{ij}^3)=6\sigma^3\sqrt{{\pi}} \mathcal{L}_{\frac{3}{2}}^{\frac{p}{2}-1}(-\frac{\lambda_{ij}^{2}}{2}),\] and \[\mathbb{E}(A_{ij}^4)=4\sigma^4(p+\lambda_{ij}^{2})^2+8\sigma^4(p+2\lambda_{ij}^{2})\cdot\]

\noindent  Moreover, we can straightforwardly find:
\[\mathbb{E}(A^2{_{ij}})=2\sigma^2(p+\lambda_{ij}^{2}) \text{ \hspace{1cm} and \hspace{1cm} } \mathbb{V}ar(A^2{_{ij}})=8\sigma^4(p+2\lambda_{ij}^{2})   .\]
\end{proof}

\begin{lemma} \label{lemme3}
Upper bounds of $\mathbb{E}(A_{ij}A_{ij'})$ and $\mathbb{E}(A_{ij}^2 A_{ij'}^2)$ are given by:
\[\mathbb{E}(A_{ij}A_{ij'})\leq 2\sigma^2\sqrt{(p+\lambda^2_{ij})(p+\lambda^2_{ij'})}, \]
\[\mathbb{E}(A_{ij}^2 A_{ij'}^2)\leq 4\sigma^4\sqrt{\left[ (p+\lambda^2_{ij})^2+2(p+2\lambda^2_{ij})\right]\left[ (p+\lambda^2_{ij'})^2+2(p+2\lambda^2_{ij'})\right]}\cdot \]
\end{lemma}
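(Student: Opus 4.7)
The plan is to apply the Cauchy--Schwarz inequality for random variables, which gives $\mathbb{E}(UV)\leq\sqrt{\mathbb{E}(U^2)\mathbb{E}(V^2)}$ whenever the second moments exist, and then substitute the moment formulas already established in Lemma~\ref{lemma2}.

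For the first bound, I would set $U=A_{ij}$ and $V=A_{ij'}$ and write
\[
\mathbb{E}(A_{ij}A_{ij'})\;\leq\;\sqrt{\mathbb{E}(A_{ij}^{2})\,\mathbb{E}(A_{ij'}^{2})}.
\]
Plugging in $\mathbb{E}(A_{ij}^{2})=2\sigma^{2}(p+\lambda_{ij}^{2})$ from Lemma~\ref{lemma2} produces the claimed factor $2\sigma^{2}\sqrt{(p+\lambda_{ij}^{2})(p+\lambda_{ij'}^{2})}$ after pulling the constant $2\sigma^{2}$ out of the square root.

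For the second bound, I would take $U=A_{ij}^{2}$ and $V=A_{ij'}^{2}$, so that
\[
\mathbb{E}(A_{ij}^{2}A_{ij'}^{2})\;\leq\;\sqrt{\mathbb{E}(A_{ij}^{4})\,\mathbb{E}(A_{ij'}^{4})},
\]
and substitute $\mathbb{E}(A_{ij}^{4})=4\sigma^{4}(p+\lambda_{ij}^{2})^{2}+8\sigma^{4}(p+2\lambda_{ij}^{2})$. Factoring the common $4\sigma^{4}$ from each moment yields a product $16\sigma^{8}$ inside the square root, which simplifies to the prefactor $4\sigma^{4}$ and leaves exactly the bracketed expressions stated in the lemma.

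There is no real obstacle here: the moments $\mathbb{E}(A_{ij}^{2})$ and $\mathbb{E}(A_{ij}^{4})$ needed by Cauchy--Schwarz have already been computed in closed form in Lemma~\ref{lemma2}, and Cauchy--Schwarz requires no distributional assumption beyond finiteness of these moments, which the non-central chi distribution plainly satisfies. The only point worth a brief remark is that the bounds are valid without any independence assumption between the noise vectors $\varepsilon_i,\varepsilon_j,\varepsilon_{j'}$ appearing in the definitions of $A_{ij}$ and $A_{ij'}$, so the result will apply in both the independent and dependent settings considered later in the paper.
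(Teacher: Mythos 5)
Your proposal is correct and follows essentially the same route as the paper: both apply the Cauchy--Schwarz inequality to the pairs $(A_{ij},A_{ij'})$ and $(A_{ij}^2,A_{ij'}^2)$ and then substitute the second and fourth moments from Lemma~\ref{lemma2}. The only caveat to your closing remark is that while Cauchy--Schwarz itself needs no independence between $A_{ij}$ and $A_{ij'}$, the closed-form moments you substitute do rely on the i.i.d.\ Gaussian assumption on the $\varepsilon_i$ made in Lemma~\ref{lemma2}.
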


\begin{proof}

\noindent The variables $A_{ij}$ and $A_{ij'}$ are two dependent random variables.
\noindent Using the Cauchy-Schwartz inequality, we can write:
\[ \mathbb{E}(A_{ij}A_{ij'})\leq \sqrt{\mathbb{E}(A_{ij}^2)\mathbb{E}(A_{ij'}^2)} \cdot\]  

\noindent Using Lemma $\ref{lemma2}$, we obtain:
\[\mathbb{E}(A_{ij}A_{ij'})\leq 2\sigma^2\sqrt{(p+\lambda^2_{ij})(p+\lambda^2_{ij'})}\cdot \]

\noindent Moreover,
\begin{eqnarray}
\mathbb{E}(A_{ij}^2A_{ij'}^2)\leq \sqrt{\mathbb{E}(A_{ij}^4)\mathbb{E}(A_{ij'}^4)}\cdot\nonumber
\end{eqnarray}

\noindent And from Lemma $\ref{lemma2}$, we obtain:
\[\mathbb{E}(A_{ij}^2 A_{ij'}^2)\leq 4\sigma^4\sqrt{\left[ (p+\lambda^2_{ij})^2+2(p+2\lambda^2_{ij})\right]\left[ (p+\lambda^2_{ij'})^2+2(p+2\lambda^2_{ij'})\right]} \cdot \]
\end{proof}

\begin{lemma} \label{lemma5}
Lower bounds of \ $\mathbb{V}ar(A_{ij}+A_{ij'})$, $cov(A_{ij},A_{ij'})$ and $\mathbb{E}(A_{ij}A_{ij'})$ are given by:
\[\mathbb{V}ar(A_{ij}+A_{ij'})\geq 2\sigma^2\left( p+\lambda_{jj'}^2-(\mu_{ij}+\mu_{ij'})^2 \right),\]

\[cov(A_{ij},A_{ij'})\geq  -\sigma^2 \left( p+2 \mu_{ij} \mu_{ij'}+ \lambda_{ij}^2+\lambda_{ij'}^2-\lambda_{jj'}^2 \right),\]

\[\mathbb{E}(A_{ij}A_{ij'}) \geq -\sigma^2 \left( p+ \lambda_{ij}^2+\lambda_{ij'}^2-\lambda_{jj'}^2 \right) \cdot\]
\end{lemma}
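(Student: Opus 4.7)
The three bounds are linked: once the first is established, the other two follow algebraically. So my plan is to prove them in the order given, using the triangle inequality as the single probabilistic input and then propagating the result through standard identities.

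For the first bound, the key observation is the triangle inequality applied pointwise:
$A_{jj'} \leq A_{ij}+A_{ij'}$,
since $A_{ij}$, $A_{ij'}$ and $A_{jj'}$ are the lengths of the three sides of the triangle with vertices $X_i+\theta_i+\varepsilon_i$, $X_j+\theta_j+\varepsilon_j$, $X_{j'}+\theta_{j'}+\varepsilon_{j'}$ in $\mathbb{R}^p$. Squaring and taking expectations gives $\mathbb{E}[(A_{ij}+A_{ij'})^2]\geq \mathbb{E}[A_{jj'}^2]$. By Lemma \ref{lemma2} applied to the pair $(j,j')$, the right-hand side equals $2\sigma^2(p+\lambda_{jj'}^2)$. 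On the other hand, still by Lemma \ref{lemma2}, $\mathbb{E}(A_{ij}+A_{ij'})=\sqrt{2}\sigma(\mu_{ij}+\mu_{ij'})$, so $[\mathbb{E}(A_{ij}+A_{ij'})]^2=2\sigma^2(\mu_{ij}+\mu_{ij'})^2$. Subtracting yields
\[\mathbb{V}ar(A_{ij}+A_{ij'})=\mathbb{E}[(A_{ij}+A_{ij'})^2]-[\mathbb{E}(A_{ij}+A_{ij'})]^2\geq 2\sigma^2\bigl(p+\lambda_{jj'}^2-(\mu_{ij}+\mu_{ij'})^2\bigr),\]
which is the first inequality.

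For the second inequality I would use the standard identity $2\,\mathrm{cov}(A_{ij},A_{ij'})=\mathbb{V}ar(A_{ij}+A_{ij'})-\mathbb{V}ar(A_{ij})-\mathbb{V}ar(A_{ij'})$. Lemma \ref{lemma2} gives $\mathbb{V}ar(A_{ij})=2\sigma^2(p+\lambda_{ij}^2-\mu_{ij}^2)$ and similarly for $(i,j')$. Substituting the lower bound for $\mathbb{V}ar(A_{ij}+A_{ij'})$ just obtained, the $\mu_{ij}^2$ and $\mu_{ij'}^2$ terms cancel with those inside $(\mu_{ij}+\mu_{ij'})^2$, and the three copies of $p$ combine into $-p$; after dividing by $2$ this produces exactly $-\sigma^2(p+2\mu_{ij}\mu_{ij'}+\lambda_{ij}^2+\lambda_{ij'}^2-\lambda_{jj'}^2)$.

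The third bound is then immediate from $\mathbb{E}(A_{ij}A_{ij'})=\mathrm{cov}(A_{ij},A_{ij'})+\mathbb{E}(A_{ij})\mathbb{E}(A_{ij'})$: plugging in $\mathbb{E}(A_{ij})\mathbb{E}(A_{ij'})=2\sigma^2\mu_{ij}\mu_{ij'}$ from Lemma \ref{lemma2} and the covariance bound just proved, the $\pm2\sigma^2\mu_{ij}\mu_{ij'}$ terms cancel and the stated inequality drops out.

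No step is really an obstacle; the only place where care is needed is the triangle-inequality step, since it must be applied before squaring (squaring preserves the inequality because both sides are nonnegative) and one must correctly identify which pair of indices $(j,j')$ produces the parameter $\lambda_{jj'}$ appearing in the bound. The rest is bookkeeping using the moment formulas of Lemma \ref{lemma2} and elementary identities between variance, covariance and expectation of a product.
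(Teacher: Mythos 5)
Your proposal is correct and follows essentially the same route as the paper's own proof: the triangle inequality $A_{jj'}\leq A_{ij}+A_{ij'}$ squared and averaged against the exact second moment from Lemma \ref{lemma2}, then the polarization identity for the covariance, then $\mathbb{E}(A_{ij}A_{ij'})=\mathrm{cov}(A_{ij},A_{ij'})+\mathbb{E}(A_{ij})\mathbb{E}(A_{ij'})$. The algebraic cancellations you describe check out exactly as in the paper.
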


\begin{proof}
We have:
\begin{eqnarray}
A_{ij}+A_{ij'}&=&\lVert X_i+\theta_i+\varepsilon_i- X_j-\theta_j-\varepsilon_j \rVert+\lVert X_{i}+\theta_{i}+\varepsilon_{i}- X_{j'}-\theta_{j'}-\varepsilon_{j'} \rVert\nonumber\\
&\geq & \lVert X_j+\theta_j+\varepsilon_j- X_{j'}-\theta_{j'}-\varepsilon_{j'} \rVert=A_{jj'}\cdot\label{eqE12}
\end{eqnarray}

\noindent The variance of $A_{ij}+A_{ij'}$ is given by:
\[\mathbb{V}ar(A_{ij}+A_{ij'})=\mathbb{E}(A_{ij}+A_{ij'})^2-\left(\mathbb{E}(A_{ij}+A_{ij'})\right)^2\cdot\]
\noindent Using inequality $\ref{eqE12}$ and the positively of $A_{ij}$,  we have $(A_{ij}+A_{ij'})^2 \geq A_{jj'}^2$ and then:
\[ \mathbb{E}\left( (A_{ij}+A_{ij'})^2 \right) \geq \mathbb{E} (A_{jj'}^2)=2\sigma^2(p+\lambda_{jj'}^2)\cdot\]
\noindent Hence, we have:
\begin{eqnarray}
\mathbb{V}ar(A_{ij}+A_{ij'})&\geq& 2\sigma^2(p+\lambda_{jj'}^2)-\left(\mathbb{E}(A_{ij})+ \mathbb{E}(A_{ij'}) \right)^2\nonumber\\
&\geq& 2\sigma^2(p+\lambda_{jj'}^2)- 2\sigma^2(\mu_{ij}+\mu_{ij'})^2\nonumber\\
&\geq& 2\sigma^2\left( p+\lambda_{jj'}^2- (\mu_{ij}+\mu_{ij'})^2\right)\cdot\nonumber
\end{eqnarray}
\vspace{3mm}

\noindent To obtain the lower bound of $cov(A_{ij},A_{ij'})$, we use the definition: 
\[cov(A_{ij},A_{ij'})=\frac{1}{2} \left[ \mathbb{V}ar(A_{ij}+A_{ij'})-\mathbb{V}ar(A_{ij})-\mathbb{V}ar(A_{ij'}) \right]\cdot\]
Then, using Lemma \ref{lemma2} and the above result, we obtain: 
\begin{eqnarray}
cov(A_{ij},A_{ij'})&\geq& \sigma^2(p+\lambda_{jj'}^2- (\mu_{ij}+\mu_{ij'})^2)-\sigma^2(p+\lambda_{ij}^2-\mu_{ij}^2)-\sigma^2(p+\lambda_{ij'}^2-\mu_{ij'}^2) \nonumber\\
&\geq& -\sigma^2 \left( p+2 \mu_{ij} \mu_{ij'}+ \lambda_{ij}^2+\lambda_{ij'}^2-\lambda_{jj'}^2 \right)\cdot \label{eqc32}
\end{eqnarray}

\noindent Concerning the lower bound of $\mathbb{E}(A_{ij}A_{ij'})$, we have:
\begin{eqnarray}
\mathbb{E}(A_{ij}A_{ij'})&=&cov(A_{ij},A_{ij'})+\mathbb{E}(A_{ij}) \mathbb{E}(A_{ij'})\nonumber\\
&\geq& -\sigma^2 \left( p+2 \mu_{ij} \mu_{ij'}+ \lambda_{ij}^2+\lambda_{ij'}^2-\lambda_{jj'}^2 \right)+2\sigma^2 \mu_{ij} \mu_{ij'}\nonumber\\
&\geq& -\sigma^2 \left( p+ \lambda_{ij}^2+\lambda_{ij'}^2-\lambda_{jj'}^2 \right)\cdot\nonumber
\end{eqnarray}
\end{proof}

\begin{lemma} \label{lemma6}
Lower bounds of $\mathbb{E}(A_{ij}^2 A_{ij'})$ and $\mathbb{E}( A^2_{ij'}A_{ij})$ are given by:
\[\mathbb{E}(A^2_{ij}A_{ij'}) \geq -\sigma^2 \left( p+ \lambda_{ij}^2+\lambda_{ij'}^2-\lambda_{jj'}^2 \right) -\sqrt{2}\sigma \mu_{ij'},\]
\[\mathbb{E}(A^2_{ij'}A_{ij}) \geq -\sigma^2 \left( p+ \lambda_{ij}^2+\lambda_{ij'}^2-\lambda_{jj'}^2 \right) -\sqrt{2}\sigma \mu_{ij}\cdot\]
\end{lemma}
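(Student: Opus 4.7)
The plan is to reduce the cubic moment $\mathbb{E}(A_{ij}^2 A_{ij'})$ to the mixed second moment $\mathbb{E}(A_{ij}A_{ij'})$, which has already been lower-bounded in Lemma~\ref{lemma5}, plus a correction involving the single first moment $\mathbb{E}(A_{ij'})$ computed in Lemma~\ref{lemma2}. The bridging device is the elementary scalar inequality $x^2 + 1 \geq x$, valid for every real number because $x^2-x+1 = (x-\tfrac{1}{2})^2 + \tfrac{3}{4} > 0$.

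First I would apply this inequality pointwise with $x = A_{ij}$, obtaining $A_{ij}^2 + 1 \geq A_{ij}$ almost surely. Since $A_{ij'}$ is a Euclidean norm and is therefore non-negative, multiplying both sides by $A_{ij'}$ preserves the direction of the inequality:
\[ A_{ij}^2 A_{ij'} + A_{ij'} \geq A_{ij} A_{ij'}. \]
Taking expectations and rearranging yields
\[ \mathbb{E}(A_{ij}^2 A_{ij'}) \geq \mathbb{E}(A_{ij}A_{ij'}) - \mathbb{E}(A_{ij'}). \]
At this stage I invoke Lemma~\ref{lemma5} to bound $\mathbb{E}(A_{ij}A_{ij'}) \geq -\sigma^2(p+\lambda_{ij}^2+\lambda_{ij'}^2-\lambda_{jj'}^2)$, and Lemma~\ref{lemma2} to identify $\mathbb{E}(A_{ij'}) = \sqrt{2}\sigma\mu_{ij'}$. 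Substituting produces the announced bound for $\mathbb{E}(A_{ij}^2 A_{ij'})$.

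The companion inequality for $\mathbb{E}(A_{ij'}^2 A_{ij})$ follows by the same argument with the roles of the indices $j$ and $j'$ swapped: apply $A_{ij'}^2 + 1 \geq A_{ij'}$, multiply by the non-negative factor $A_{ij}$, take expectations, and subtract the closed-form value $\mathbb{E}(A_{ij}) = \sqrt{2}\sigma\mu_{ij}$ furnished by Lemma~\ref{lemma2}. There is no genuine obstacle in the argument; the only subtlety worth flagging is the choice of linearization $x^2 + 1 \geq x$ rather than coarser variants such as $x^2 \geq 2x - 1$, which would introduce a spurious factor of $2\sigma^2$ in front of the bracket and ruin the sharpness of the stated bound.
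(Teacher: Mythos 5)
Your proposal is correct and follows essentially the same route as the paper: the paper's proof also rests on the pointwise inequality $A_{ij}^2 \geq A_{ij}-1$ (your $x^2+1\geq x$), multiplication by the non-negative factor $A_{ij'}$, and then the lower bound on $\mathbb{E}(A_{ij}A_{ij'})$ from Lemma~\ref{lemma5} together with $\mathbb{E}(A_{ij'})=\sqrt{2}\sigma\mu_{ij'}$ from Lemma~\ref{lemma2}. No substantive difference to report.
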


\begin{proof}
\begin{eqnarray}
A_{ij}^2A_{ij'} &\geq &(A_{ij}-1)A_{ij'} \text{ as } A^2_{ij} \geq A_{ij}-1\nonumber\\
&\geq& A_{ij}A_{ij'}-A_{ij'}\cdot\nonumber
\end{eqnarray}

\noindent Thus, \[\mathbb{E}\left( A_{ij}^2A_{ij'}\right) \geq \mathbb{E} (A_{ij}A_{ij'})-\mathbb{E}( A_{ij'})\cdot\]
Lemma $\ref{lemma5}$ leads:
\[\mathbb{E}(A^2_{ij}A_{ij'}) \geq -\sigma^2 \left( p+ \lambda_{ij}^2+\lambda_{ij'}^2-\lambda_{jj'}^2 \right) -\sqrt{2}\sigma \mu_{ij'}\cdot\]

\noindent Similarly, we obtain:
\[\mathbb{E}(A^2_{ij'}A_{ij}) \geq -\sigma^2 \left( p+ \lambda_{ij'}^2+\lambda_{ij}^2-\lambda_{jj'}^2 \right) -\sqrt{2}\sigma \mu_{ij}\cdot\]
\end{proof}

\section{Calculation of the expectation value of error $\Delta$} \label{App2}

\noindent Using equations (\ref{eqdef1}) and (\ref{eqt2}), we have:
 \begin{eqnarray}
e_{ij}&=&d^2_{{ij}}+a^2A_{ij}^2-2 a  d_{ij} A_{ij}\cdot \label{eqt3}
\end{eqnarray}

\noindent We suppose that the $p$ components of vectors $\varepsilon_i$, for all $i=1,\dots,n $ are  identically independent random variables and normally distributed. Using Lemma \ref{lemma2}, we obtain: \[\mathbb{E}(e_{ij})=d_{ij}^2 +2a^2\sigma^2(p+\lambda_{ij}^{2})-2\sqrt{\pi}a \sigma  d_{ij} \mathcal{L}_{\frac{1}{2}}^{\frac{p}{2}-1}\left(-\frac{\lambda_{ij}^2}{2}\right)\cdot\] 
where $\displaystyle \lambda_{ij}=\frac{1}{\sqrt{2}\sigma}\lVert X_i+\theta_i-X_j-\theta_j\rVert_{_2} \cdot$

\noindent Hence, the expectation of error $\Delta$ is equal to: 

\begin{equation} \displaystyle \mathbb{E}(\Delta)=\sum_{1\leq i<j\leq n} \left[ d_{ij}^2 +2a^2\sigma^2(p+\lambda^2_{ij})-2\sqrt{\pi} a \sigma  d_{ij} \mathcal{L}_{\frac{1}{2}}^{\frac{p}{2}-1}\left(-\frac{\lambda_{ij}^2}{2}\right) \right]\cdot \label{eqEsp}\end{equation}

\section{Calculation of variance value of error $\Delta$} \label{App3}
\noindent The variance of $\Delta$ is given by: 
\begin{eqnarray}
\mathbb{V}ar(\Delta)&=&\mathbb{V}ar(\displaystyle \sum_{1\leq i<j\leq n} e_{ij})\nonumber\\
&=&\sum_{1\leq i<j\leq n} \mathbb{V}ar(e_{ij}) +{ \sum_{\begin{array}{c}
1\leq i<j\leq n\\
1\leq i'<j'\leq n\\
\end{array}}}  cov(e_{ij}; e_{i'j'})\nonumber
\end{eqnarray}

\noindent As  $cov(e_{ij},e_{i'j'})=0$,  if $(i,j)\cap(i',j')=\emptyset$ we obtain:
\begin{equation}\label{eqforVar}
\mathbb{V}ar(\Delta)=\sum_{1\leq i<j\leq n} \mathbb{V}ar(e_{ij}) +2\sum_{1\leq i <j<j'\leq n} cov(e_{ij},e_{ij'}) \cdot 
\end{equation}

%\noindent As the random variables $\varepsilon_i$ are independent for all $i=1,\dots,n$, then we have:
%\[cov(e_{ij}; e_{i'j'})=0 \text{ if } i\neq i', j\neq j', i\neq j' \text{ and } j\neq i' \cdot\]

%\noindent So, the variance of error $\Delta$ is:
%\[\mathbb{V}ar(\Delta)= \sum_{1\leq i<j\leq n} \mathbb{V}ar(e_{ij}) +2\sum_{1\leq i <j<j'\leq n} cov(e_{ij},e_{ij'}) \cdot\]

\noindent To calculate $\mathbb{V}ar(\Delta)$ it is necessary to calculate $\mathbb{V}ar(e_{ij})$ and $cov(e_{ij}; e_{ij'})$ for all couples $(i,j)$ and $(i,j')$ with $
1\leq i<j<j'\leq n$. 
\subsection{Calculation of $\mathbb{V}ar(e_{ij})$}
We have from Equation (\ref{eqt3}):
 \begin{eqnarray}
e_{ij}&=&d^2_{{ij}}+a^2A_{ij}^2-2 a d_{ij} A_{ij}\cdot \nonumber
\end{eqnarray}
\noindent  The definition of variance is:
\begin{eqnarray}
\mathbb{V}ar(e_{ij})=\mathbb{E}(e_{{ij}}^2)-(\mathbb{E}(e_{ij}))^2 \cdot\label{eqt5}
\end{eqnarray}
Let begin by the calculation of  $\mathbb{E}(e_{{ij}}^2)$.
\begin{eqnarray}
e_{{ij}}^2&=&\left(d_{{ij}}^2+a^2A_{{ij}}^2-2ad_{ij}A_{ij}\right)^2\nonumber\\
&=& d_{{ij}}^4+a^4A_{ij}^4+6a^2d_{ij}^2A_{ij}^2-4ad_{ij}^3A_{ij}-4a^3d_{ij}A_{ij}^3\cdot \label{eqES}\nonumber
\end{eqnarray}

\noindent The expectation of $e_{ij}^2$ is then given by:
\begin{equation}
\mathbb{E}(e_{{ij}}^2)=d_{{ij}}^4+a^4\mathbb{E}( A_{ij}^4)+6a^2d_{ij}^2\mathbb{E}(A_{ij}^2)-4ad_{ij}^3\mathbb{E}(A_{ij})-4a^3d_{ij}\mathbb{E}(A_{ij}^3)\cdot \label{eqvar}
\end{equation}
\noindent Using lemma \ref{lemma2}, we can obtain all the terms of the moments presented in Eequation (\ref{eqvar}).
Then, we obtain the variance by replacing each term with their value in Equation $(\ref{eqt5})$ and we obtain:
\begin{eqnarray}
\mathbb{V}ar(e_{ij})=&a^4\mathbb{E}( A_{ij}^4)+4a^2d_{ij}^2\mathbb{E}(A_{ij}^2)-4a^3d_{ij}\mathbb{E}(A_{ij}^3)-a^4(\mathbb{E}(A_{ij}^2))^2-4 a^2 d_{ij}^2 (\mathbb{E}(A_{ij}))^2\nonumber\\
&+4a^3 d_{ij}\mathbb{E}(A_{ij}^2) \mathbb{E}(A_{ij})\cdot\nonumber
\end{eqnarray}

\subsection{Calculation of $cov(e_{ij}, e_{ij'})$}
\noindent Now, we want to calculate $cov(e_{ij}, e_{ij'})$. The definition of the covariance is given by:
\begin{eqnarray}
cov(e_{ij}, e_{ij'})=\mathbb{E}(e_{ij} e_{ij'})-\mathbb{E}(e_{ij}) \mathbb{E}(e_{ij'})\cdot \label{eq1}
\end{eqnarray}
To calculate the expectation $\mathbb{E}(e_{ij} e_{ij'})$, we firstly calculate $e_{ij}e_{ij'}$: 
\begin{eqnarray}
e_{ij}e_{ij'}&=&\left( d_{ij}^2+a^2A_{ij}^2-2ad_{ij}A_{ij}\right)\left( d_{ij'}^2+a^2A_{ij'}^2-2ad_{ij'}A_{ij'}\right)\nonumber\\
&=&d_{ij}^2d_{ij'}^2+a^2d_{ij}^2A_{ij'}^2-2ad_{ij}^2d_{ij'}A_{ij'}
+a^2d_{ij'}^2A_{ij}^2+a^4A_{ij}^2A_{ij'}^2\nonumber\\
&&-2a^3d_{ij'}A_{ij}^2A_{ij'}-2ad_{ij}d_{ij'}^2A_{ij}-2a^3d_{ij}A_{ij}A_{ij'}^2+4a^2d_{ij}d_{ij'}A_{ij}A_{ij'} \cdot\nonumber
\end{eqnarray}

\noindent Passing to the expectation, we obtain:
\begingroup\makeatletter\def\f@size{11}\check@mathfonts
\begin{eqnarray}
\mathbb{E} (e_{ij}e_{ij'})=d_{ij}^2d_{ij'}^2+a^2d_{ij}^2\mathbb{E}( A_{ij'}^2)-2ad_{ij}^2d_{ij'}\mathbb{E}(A_{ij'})
+a^2d_{ij'}^2\mathbb{E}(A_{ij}^2)+a^4\mathbb{E}(A_{ij}^2A_{ij'}^2)\nonumber\\
 -2a^3d_{ij'}\mathbb{E}(A_{ij}^2A_{ij'})-2ad_{ij}d_{ij'}^2\mathbb{E}(A_{ij})-2a^3d_{ij}\mathbb{E}(A_{ij}A_{ij'}^2)+4a^2d_{ij}d_{ij'}\mathbb{E}(A_{ij}A_{ij'})\cdot \nonumber\\\label{eqCov}
\end{eqnarray}
\endgroup

\noindent Using the five lemmas presented in section $\ref{App1}$, we can  bound $\mathbb{E}(e_{ij}e_{ij'})$  in a way to obtain an upper bound of the covariance $cov(e_{ij}e_{ij'})$. We note $B_{ijj'}$ the  upper bound of $\mathbb{E}(e_{ij}e_{ij'})$. So, returning to Equation ($\ref{eqforVar}$), we obtain:
\[\mathbb{V}ar(\Delta)\leq \sum_{1\leq i<j\leq n} \mathbb{E}(e_{ij}^2)-\mathbb{E}(e_{ij})^2 +2\sum_{1\leq i <j<j'\leq n} B_{ijj'}-2\sum_{1\leq i <j<j'\leq n} \mathbb{E}(e_{ij})\mathbb{E}(e_{ij'})\cdot\]

\end{appendices}
\end{document}